\documentclass[]{article}

%%% Preamble

% Packages
\usepackage{spur}
\usepackage[headheight = 14pt,margin=1in]{geometry}
\usepackage[shortlabels]{enumitem}

% Title and Authors
\title{Simple Eigenvalues and Non-vanishing Eigenvectors of the Anderson Model}
\author{Oluyinka Lindblad and Ezra Guerrero}
\date{}

% Theorems and Lemmas
\theoremstyle{plain}
\newtheorem{theorem}{Theorem}[section]

\theoremstyle{plain}
\newtheorem{lemma}[theorem]{Lemma}

\theoremstyle{plain}
\newtheorem{proposition}[theorem]{Proposition}

\theoremstyle{plain}
\newtheorem{corollary}[theorem]{Corollary}

\theoremstyle{definition}
\newtheorem{definition}[theorem]{Definition}

\theoremstyle{plain}

%%% Document
\begin{document}

% Title
\maketitle

% Abstract
\begin{abstract}
    We consider the Anderson model on the finite grid $G = \ZZ/L_1\ZZ\times\cdots\times\ZZ/L_d\ZZ$, defined by the random Hamiltonian $H_t=\Delta+tV$, where $\Delta$ is the discrete Laplacian and $V=\mathrm{diag}(\{\omega_{x}\}_{x\in G})$ is a random onsite potential with $\omega_x\sim\mu$ i.i.d. We ask the natural question of when $H_t$ has simple eigenvalues and non-vanishing eigenvectors. We prove that, when $\mu$ is a continuous probability distribution, $H_t$ has this property for all but finitely many $t$ values with probability $1$. However, when $\mu$ is a Bernoulli distribution, the conditions fail with positive probability, for which we give a lower bound. We also calculate the exact probability of these conditions being met in the Bernoulli case when $d = 1$ and $L = L_1$ is prime.
    \begin{comment}
    \lior[]{We consider the Anderson model on the finite-volume box, that is a graph $G= \ZZ/L_1\ZZ\times\cdots\times\ZZ/L_d\ZZ$ with random Hamiltonian $H=\Delta+tV$, where $\Delta$ is the discrete Graph Laplacian and $V=\mathrm{diag}(\{w_{x}\}_{x\in G})$ is a random onsite potential with $\omega_x\sim\mu$ i.i.d, and $t\in\RR$ is the potential strength.  We address the natural question of whether $H_t$ has a nondegenerate spectrum and eigenvectors which vanish at no vertex. We prove that, with probability $1$, $H_{t}$ has this property for all but finitely many $t$ values, as long as $\mu$ is continuous probability distribution. However, when $\mu$ is Bernoulli( $\pm 1$ with probability $1/2$), $H_t$ fails with positive probability, that we bound from below. }
    \end{comment}
\end{abstract}

% Introduction
\section{Introduction}
\label{section_introduction}

The Anderson model, introduced in 1958 by P.W. Anderson \cite{anderson1958absence} is the toy model for the study of localization of electrons in disordered media. If the strength of the disorder is large enough, eigenvectors in such systems become localized \cite{Abou-Chacra_1973,aizenmanWarzel2009localization}. Therefore, we expect to see a transition from delocalized states to localized states as the strength of the disorder increases. This transition leads to qualitative changes in the physics of the system. For example, for an electron in a conducting sheet, this transition implies a conductor-insulator phase transition. 

Concretely, the Anderson model is given by a random Hamiltonian on the integer lattice $\ZZ^d$ defined as
\[ H_t := \Delta + t V, \]
where $\Delta$ is the discrete Laplacian on $\ZZ^d$, $t\in\RR$ is a coupling constant, and $V= \diag(\{\omega_x\}_{x\in\ZZ^d})$ is an onsite potential given by independent identically distributed (i.i.d.) random variables $\omega_x$ with probability distribution $\mu$. 

\begin{figure}[h!]
    \centering
    \begin{minipage}{0.45\linewidth}
        \centering
        \includegraphics[scale=0.55]{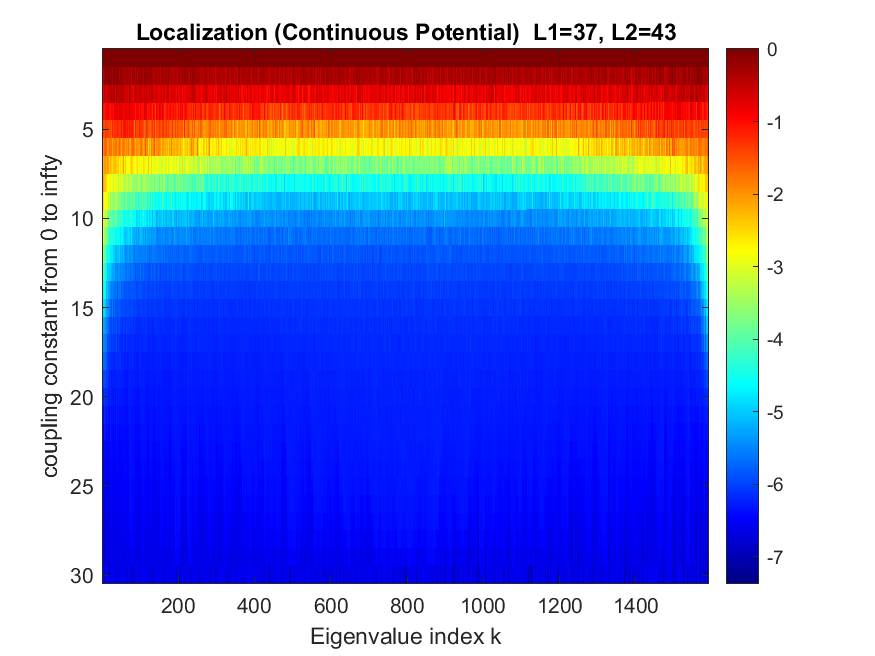}
    \end{minipage}
    \begin{minipage}{0.45\linewidth}
        \centering
        \includegraphics[scale=0.4]{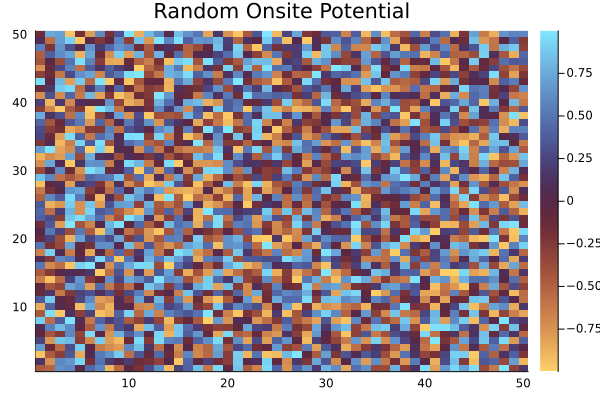}
    \end{minipage}
    \caption{The localization-delocalization phase transition: $\log(\|\varphi_k\|_{4}^4)$ for the $k$-th eigenvector of $H_t$, as a function of $k$ on the x-axis and $t$ on the y-axis (left) and a random onsite potential (right), with $\mu\sim\operatorname{Uniform}[-1,1]$. }
\end{figure}

In order to tractably study the localization properties of the Anderson model, we consider restrictions to a finite grid \cite{klein2007multiscaleanalysislocalizationrandom}. In this paper, we consider periodic boundary conditions, so our graph is
\[ G := (\ZZ/L_1\ZZ)\times(\ZZ/L_2\ZZ)\times\cdots\times(\ZZ/L_d\ZZ), \]
where $L_1,L_2,\ldots,L_d>2$ are the side lengths of the grid. Then, the random Hamiltonian is given by
\[ (H_t\psi)(j) = \sum_{k\sim j}(\psi(j)-\psi(k)) + t\omega_j\psi(j), \]
where $k\sim j$ means $k$ and $j$ are adjacent in $G$. We are interested in the spectral behavior of $H_t$ in the limit as $L_1,L_2,\ldots,L_d$ approach infinity uniformly.

In this paper, we focus on two natural spectral conditions on the random Hamiltonian. We study the probability that $H_t$ both (a) has simple eigenvalues and (b) has all entries of all eigenvectors of $H_t$ nonzero. In the case of high disorder, Theorem 1.6 in \cite{ELGART20163465} shows that (a) holds with high probability. Moreover, in this regime, we expect localized eigenvectors to decay exponentially. A common question is the behavior of this exponential decay \cite{Ding_Smart_2019}. For example, the Landis conjecture \cite{logunov2020landisconjectureexponentialdecay} states that sufficiently fast exponential decay leads to compactly supported eigenvectors. Condition (b) asks a more general question --- when do the eigenvectors have any zeros at all?
\begin{comment}
\lior[]{The question of whether eigenfunctions are compactly supported or just $L^2$ localized is usually treated with versions of Landis conjecture (find reference) that says that if the eigenfunction decays faster than some exponential in a region then it is zero in that region. However a compactly supported eigenfunction in $Z^d$ would be a compactly supported eigenfunction for $G$ for large enough side-lengths, and so (b) asks a more general question, when does such eigenfucntion vanish somewhere.}
\end{comment}
Moreover, when conditions (a) and (b) hold, we can define the nodal count as in \cite{AlonGoresky}. Experimental data suggests that the variance of this quantity is a predictor for eigenvector localization. We expect that understanding these conditions will be fruitful in the study of Anderson localization.

If a matrix satisfies (b) we say that it has \textit{non-vanishing eigenvectors}. Otherwise, we say it has some \textit{vanishing eigenvector}. We present succinct proofs on the genericity of our conditions when $\mu$ is continuous and on the probability of our conditions being met when $\mu$ is discrete. 
%genericity and lack of genericity of our conditions. 
We state our main results below:

\subsection{Main Results}
\label{subsection_main-results}

\begin{definition}[Good/Bad Potentials]
\label{definition_good-bad-potentials}
    We say that $V$ is a good potential if $H_t=\Delta+tV$ has simple eigenvalues and non-vanishing eigenvectors for all but finitely many $t$ values. We say that $V$ is a bad potential if $H_t$ fails to satisfy at least one of these conditions for any $t\in\RR$.
\end{definition}

It is not clear that every potential must be either good or bad. In light of Proposition \ref{proposition_all-but-finitely}, we see that every potential must fall into exactly one of these classifications.
% \lior[]{rewrite this sentence. "Forthunetly" sounds wierd here. }

In section \ref{section_continuous}, we establish results for general perturbations and use them to prove the following theorem:

\begin{restatable}{theorem}{theoremcontinuousprob}
\label{theorem_continuous-prob}
    Let $G$ be any connected graph. If $\mu$ is absolutely continuous with respect to the Lebesgue measure, then $\PP(V\text{ is good})=1$. In particular, for the Anderson model, $\PP(V\text{ is good})=1$ for any $d$ and any side lengths $L_1,\ldots,L_d$.
\end{restatable}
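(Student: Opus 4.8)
The plan is to reduce the statement to a genericity fact about the large-coupling limit $t\to\infty$. By Proposition~\ref{proposition_all-but-finitely} every potential is either good or bad, and a potential is bad precisely when $H_t$ fails condition (a) or (b) for \emph{every} $t\in\RR$; so, writing $\omega=(\omega_x)_{x\in G}\in\RR^{|G|}$ and using that $\mu$ absolutely continuous forces $\mu^{\otimes|G|}$ absolutely continuous with respect to Lebesgue measure on $\RR^{|G|}$, it is enough to exhibit a Lebesgue-conull set of potentials $\omega$ for each of which there exists \emph{some} $t$ with $H_t$ having simple eigenvalues and non-vanishing eigenvectors. In fact I will get all large $|t|$ at once.

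Fix $\omega$ with pairwise distinct entries and write $s=1/t$, so that $\frac1t H_t=V+s\Delta$ is a polynomial --- in particular analytic --- family in $s$ whose value at $s=0$ is the diagonal matrix $V$, with simple spectrum $\{\omega_x\}_{x\in G}$. Analytic perturbation theory then provides, for $|s|$ small, simple eigenvalues of $V+s\Delta$ (hence of $H_t$) and eigenvectors $v_x(s)=e_x+O(s)$ depending analytically on $s$. Thus $H_t$ has simple eigenvalues for all large $|t|$, and what remains is to rule out zeros of the coordinates of $v_x(s)$.

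Plugging $v_x(s)=\sum_{m\ge 0}s^m w_m$ (with $w_0=e_x$) into the eigenvalue equation and solving order by order, an induction on $m$ shows that for $y\ne x$ the coefficient of $s^m$ in $(v_x(s))_y$ vanishes whenever $m<d(x,y)$ (the graph distance), while the coefficient of $s^{d(x,y)}$ equals
\[
 g_{x,y}(\omega)\;=\;\sum_{\gamma}\ \prod_{i=1}^{d(x,y)}\frac{1}{\omega_{u_i}-\omega_x},
\]
where $\gamma=(x=u_0\sim u_1\sim\cdots\sim u_{d(x,y)}=y)$ ranges over the shortest paths from $x$ to $y$. Grouping the paths by their vertex set writes $g_{x,y}(\omega)=\sum_S N_S\prod_{v\in S}(\omega_v-\omega_x)^{-1}$, where $S$ runs over subsets of $G$ and the $N_S$ are nonnegative integers summing to the positive number of shortest $x$--$y$ paths; multiplying through by $\prod_{v\ne x}(\omega_v-\omega_x)$ turns this into a nonzero nonnegative-integer combination of distinct monomials in the linear coordinates $(\omega_v-\omega_x)_{v\ne x}$, so $g_{x,y}\not\equiv 0$. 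Since $G$ is connected, $d(x,y)<\infty$ for every pair, so for $\omega$ off the hyperplanes $\{\omega_i=\omega_j\}$ and off the zero and pole loci of the finitely many $g_{x,y}$ --- a conull condition --- each $(v_x(s))_y$ is a nonzero convergent power series, hence $(v_x(t))_y\ne 0$ for all large $|t|$. As $\{v_x(t)\}_{x\in G}$ is a full eigenbasis of $H_t$, this gives non-vanishing eigenvectors, so $H_t$ is good, for all large $|t|$.

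For any such $\omega$ the set of bad $t$ is not all of $\RR$, so by Proposition~\ref{proposition_all-but-finitely} it is finite and $V$ is good; since these $\omega$ form a conull set and $\mu^{\otimes|G|}$ is absolutely continuous with respect to Lebesgue measure, we conclude $\PP(V\text{ is good})=1$. For the Anderson model, $G=\ZZ/L_1\ZZ\times\cdots\times\ZZ/L_d\ZZ$ with all $L_i>2$ is a Cartesian product of cycles, hence connected, so the general result applies. The only step with genuine (if short) content is the non-vanishing of the shortest-path coefficient $g_{x,y}$, i.e.\ the fact that no symmetry of $G$ can force complete cancellation among the geodesics from $x$ to $y$; the remaining ingredients are the cited dichotomy, convergence of the perturbation series, and bookkeeping.
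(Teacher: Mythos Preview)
Your proof is correct and follows essentially the same route as the paper's: both expand the eigenvectors of $V+s\Delta$ perturbatively around $s=0$, identify the first nonzero Taylor coefficient of each entry as a sum over shortest paths (this is the content of the paper's Lemma~\ref{lemma_support} and Proposition~\ref{proposition_community}), show this coefficient is generically nonzero, and conclude via Proposition~\ref{proposition_all-but-finitely} and absolute continuity. The only cosmetic difference is in how genericity is phrased: the paper passes through algebraic independence of the diagonal entries over $\QQ$ (Proposition~\ref{proposition_community} plus Lemma~\ref{lemma_algebraic-independence}), whereas you observe directly that each $g_{x,y}$, after clearing denominators, is a nonzero polynomial in the shifted coordinates $\omega_v-\omega_x$ and hence has Lebesgue-null zero locus---slightly more direct, but the underlying computation is identical.
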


The key insight of the proof is that $V$ being bad leads to an algebraic condition on its entries. In fact, the result holds for any $\mu$ which, with probability $1$, assigns the potential algebraically independent diagonal entries. 

This is not the case when $\mu$ is a discrete probability distribution. 
%\lior[]{the "suprisingly" was good when it was just after the theorem, but now it is wiered because you just gave a condition that it is clear that discrete distribution doesnt satisfy. So try to adjust the sentence now. }
To explain the discrepancy, in section \ref{section_symmetry} we show \textit{shared symmetries} --- orthogonal matrices that commute with both $\Delta$ and $V$ --- lead to the failure of our conditions.
%\lior[]{dont say "we examine". You show that having common symmtries leads to V being bad up to some small caveat.}
In section \ref{section_discrete}, we return to the Anderson model and use our results to establish lower bounds on the probability that $V$ is bad. %\lior[]{on the probablility that $V$ is bad}
In what follows, suppose that $\mu$ is the distribution
\[ \mu = \begin{cases}
    1,&\text{with probability }p \\
    -1, &\text{with probability }1-p
\end{cases}. \]
%\lior[]{I would use more general $\mu = \begin{cases}
%    a,&\text{with probability }p \\
%    b, &\text{with probability }1-p
%\end{cases}$ with $a\ne b$ and add a remark here that $V$ is good/bad if and only if $(a-b)V+bI$ is for any $c\ne 0$, so you can and will assume throughout $a=1,b=-1$ for convenience.  }
We prove that
%\lior[]{state here that you now look specifically at the case $G := (\ZZ/L_1\ZZ)\times(\ZZ/L_2\ZZ)\times\cdots\times(\ZZ/L_d\ZZ)$.}
\begin{restatable}{theorem}{theoremlowerboundgeneral}
\label{theorem_lower-bound-general}
    For any $d$ and any $L_1,\ldots,L_d$, we have
    \[ \PP(V\text{ is bad}) \ge \sum_{k=1}^d\left[(-1)^{k-1}\cdot\sum_{j_1<\ldots<j_k}\left(p^{L_{j_1}\cdots L_{j_k}} + (1-p)^{L_{j_1}\cdots L_{j_k}}\right)^{\frac n{L_{j_1}\cdots L_{j_k}}}\right] > 0. \]
\end{restatable}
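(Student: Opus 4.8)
The plan is to lower-bound $\PP(V\text{ is bad})$ by the probability that the potential is periodic in at least one coordinate direction, and then to evaluate that probability by inclusion--exclusion. For $j\in\{1,\dots,d\}$ let $T_j$ be the translation $x\mapsto x+e_j$ of $G$ in the $j$-th coordinate and let $P_j$ denote the corresponding $n\times n$ permutation matrix. The matrix $P_j$ is orthogonal, it commutes with $\Delta$ because the graph Laplacian of a Cayley graph is translation invariant, and $P_j\ne I$ since $L_j>2$. Define the event
\[ A_j=\{V:\ \omega_{x+e_j}=\omega_x\text{ for all }x\in G\}, \]
which is exactly the event that $P_j$ commutes with $V=\diag(\{\omega_x\})$. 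On $A_j$ the non-scalar orthogonal matrix $P_j$ is a shared symmetry of $\Delta$ and $V$, so by the results of Section~\ref{section_symmetry} the potential $V$ is bad; hence $\bigcup_{j=1}^d A_j\subseteq\{V\text{ is bad}\}$.

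To see the mechanism concretely: fix $t$ and suppose $H_t$ had simple spectrum. Then $P_j$, commuting with $H_t$, preserves each (one-dimensional) eigenspace, so it acts on each real eigenvector $\psi$ of $H_t$ by a scalar; since $P_j$ and $\psi$ are real and $P_j$ is orthogonal, that scalar is $\pm1$. Thus an orthonormal eigenbasis of $H_t$ would consist of vectors lying in $\ker(P_j-I)\cup\ker(P_j+I)$. But $P_j$ is a disjoint union of $n/L_j$ cyclic shifts of length $L_j$, so $\dim\ker(P_j-I)\le n/L_j$ and $\dim\ker(P_j+I)\le n/L_j$, and $2n/L_j<n$ because $L_j>2$; no basis of $\RR^n$ can lie in such a union. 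This contradiction shows $H_t$ has a repeated eigenvalue for every $t$, so $V$ is bad on $A_j$.

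It remains to compute $\PP\bigl(\bigcup_{j}A_j\bigr)$. For a nonempty $S=\{j_1<\dots<j_k\}\subseteq\{1,\dots,d\}$, the event $\bigcap_{i=1}^{k}A_{j_i}$ is exactly the event that $\omega$ is constant on each orbit of the subgroup generated by $e_{j_1},\dots,e_{j_k}$; these orbits partition $G$ into $n/m$ blocks of size $m=L_{j_1}\cdots L_{j_k}$, and since the $\omega_x$ are i.i.d. the restrictions of $\omega$ to distinct blocks are independent, each block being constant with probability $p^{m}+(1-p)^{m}$. Hence
\[ \PP\Bigl(\bigcap_{i=1}^{k}A_{j_i}\Bigr)=\bigl(p^{\,L_{j_1}\cdots L_{j_k}}+(1-p)^{\,L_{j_1}\cdots L_{j_k}}\bigr)^{n/(L_{j_1}\cdots L_{j_k})}, \]
and substituting this into the inclusion--exclusion identity $\PP\bigl(\bigcup_j A_j\bigr)=\sum_{k=1}^d(-1)^{k-1}\sum_{j_1<\dots<j_k}\PP\bigl(\bigcap_i A_{j_i}\bigr)$ gives the displayed sum. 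Strict positivity follows from a single term: $\PP\bigl(\bigcup_j A_j\bigr)\ge\PP(A_1)=\bigl(p^{L_1}+(1-p)^{L_1}\bigr)^{n/L_1}>0$.

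I expect the only real obstacle to be the first step --- confirming that the cyclic shift $P_j$ really forces badness for \emph{all} $t$. Unlike a reflection it is not an involution, so one must argue via the eigenvalue multiplicities of $P_j$ rather than a simple sign decomposition; the crux is the strict inequality $2n/L_j<n$, which is exactly where the hypothesis $L_j>2$ enters. The orbit count and the inclusion--exclusion bookkeeping are routine, and positivity is immediate.
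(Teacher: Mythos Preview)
Your proposal is correct and follows essentially the same approach as the paper: bound $\PP(V\text{ is bad})$ from below by the probability that $V$ commutes with at least one coordinate shift, then evaluate that probability by inclusion--exclusion over the block-constancy events. The only cosmetic difference is that the paper invokes Lemma~\ref{lemma_symmetries}(1) directly (noting $S_j$ has order $L_j>2$, hence $S_j^2\ne I$), whereas you inline an equivalent dimension count on $\ker(P_j\pm I)$; both hinge on the same hypothesis $L_j>2$.
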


We prove this by estimating the number of potentials that have certain shared symmetries with $\Delta$. Conversely, we also ask whether every bad potential shares a nontrivial symmetry with the laplacian. 
\begin{comment}
{\begin{center}
    Does every bad $V$ come from a shared symmetry?
\end{center}
\end{comment}

For a one-dimensional grid, when $L=L_1$ is prime, we prove that in fact, a potential is bad if and only if it has a reflection symmetry about some vertex. From this, we are able to count the exact number of bad potentials and study the behavior in the limit as $L \rightarrow \infty$.

%On the flip side, we show that, for a one-dimensional grid, this probability vanishes for increasing side length:
% \lior[]{flip side sounds wiered. 
% You want to explain here that the proof (ref) is based on showing that $V$ is bad when there is a common symmetry, and then estimating the number of $V$'s, among the $2^n$ possibilites, that have a common symmetry with $\Delta$. Then you can follow up with the inverse question. Does every bad $V$ comes from a common symmetry? (emphesise this question in a "center" enviroment. Then, say that the answer is "yes" for $d=1$ and $L=L_1$ prime, and that using this you can count exactly how many bad potentials are there. }

\begin{restatable}{theorem}{theoremexactprobability}
\label{theorem_exact-probability}
    If $d=1$ and the side length, $L$, is prime, then
    \[ \PP(V\text{ is bad)} = L(p^2+(1-p)^2)^{\frac{L-1}2} - (L-1)(p^L+(1-p)^L). \]
    In particular, for $p\ne0,1$, $\PP(V\text{ is good})\to 1$ as we take $L\to\oo$.
\end{restatable}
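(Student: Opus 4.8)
The plan is to leverage the characterization claimed just before the theorem: when $d=1$ and $L$ is prime, a potential $V$ is bad if and only if the vector $(\omega_0,\ldots,\omega_{L-1})\in\{\pm1\}^L$ has a reflection symmetry about some vertex of the cycle $\ZZ/L\ZZ$. So the whole computation reduces to counting, by inclusion--exclusion, the number of $\pm1$-sequences fixed by at least one of the $L$ reflections of the cycle, and then dividing by $2^L$ (equivalently, computing the probability directly using that $\omega_x$ are i.i.d. with $\PP(\omega_x=1)=p$).

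First I would set up the group of reflections. On $\ZZ/L\ZZ$ with $L$ prime (hence odd), there are exactly $L$ reflections $r_a : x\mapsto a-x$, one fixing each vertex $a$ (since $L$ is odd, each reflection has exactly one fixed point and $(L-1)/2$ transposed pairs). A potential $V$ is invariant under $r_a$ iff $\omega_{a+j}=\omega_{a-j}$ for all $j$; such configurations are determined freely by the value at the fixed vertex and the $(L-1)/2$ pair-values, so the probability that $V$ is invariant under a \emph{fixed} $r_a$ is $(p^2+(1-p)^2)^{(L-1)/2}$ — each of the $(L-1)/2$ pairs is constant with probability $p^2+(1-p)^2$, while the free fixed vertex contributes probability $1$. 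This gives the first term $L(p^2+(1-p)^2)^{(L-1)/2}$ once we sum over the $L$ choices of $a$.

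Next I would handle the overlaps. The key point is that the composition of two distinct reflections $r_a\circ r_b$ is a nontrivial rotation $x\mapsto x+(a-b)$, which generates all of $\ZZ/L\ZZ$ because $L$ is prime and $a-b\neq 0$. Hence if $V$ is invariant under two distinct reflections it is invariant under all rotations, i.e. $\omega$ is constant; and conversely a constant $\omega$ is invariant under every reflection. So the pairwise (and higher) intersections of the events $\{V\text{ invariant under }r_a\}$ all coincide with the single event $\{\omega\text{ constant}\}$, which has probability $p^L+(1-p)^L$. Inclusion--exclusion over the $L$ reflection-events then collapses: $\PP(\bigcup_a A_a) = \sum_a \PP(A_a) - \big[\binom{L}{2}-\binom{L}{3}+\cdots\big]\PP(\omega\text{ const})$. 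Using $\sum_{k\ge 2}(-1)^k\binom{L}{k} = (L-1)$ (from $\sum_{k\ge0}(-1)^k\binom{L}{k}=0$ and peeling off the $k=0,1$ terms), this yields exactly $L(p^2+(1-p)^2)^{(L-1)/2} - (L-1)(p^L+(1-p)^L)$.

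Finally, for the limiting statement I would note that for fixed $p\neq 0,1$ we have $p^2+(1-p)^2<1$, so $L(p^2+(1-p)^2)^{(L-1)/2}\to 0$ and $(L-1)(p^L+(1-p)^L)\to 0$ as $L\to\infty$ (each of $p^L$, $(1-p)^L$ decays geometrically), whence $\PP(V\text{ is bad})\to 0$, i.e. $\PP(V\text{ is good})\to 1$. The main obstacle is not the counting — that is elementary once set up — but ensuring the reflection-symmetry characterization of bad potentials (the ``only if'' direction, that every bad $V$ arises from a shared symmetry with $\Delta$) is in hand; I would rely on the structural results of Sections \ref{section_symmetry} and \ref{section_discrete} and the primality of $L$ (which rules out, e.g., sublattice-translation symmetries and forces any shared symmetry group to be generated by reflections). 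One subtlety to check carefully is that a reflection symmetry of $V$ genuinely produces either a degenerate eigenvalue or a vanishing eigenvector of $H_t$ for \emph{every} $t$ (not just all but finitely many), which is what makes such $V$ bad in the sense of Definition \ref{definition_good-bad-potentials}; this should follow by decomposing $\RR^L$ into the $\pm1$ eigenspaces of the reflection and examining the induced block structure of $H_t$.
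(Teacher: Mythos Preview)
Your proposal is correct and follows essentially the same route as the paper: invoke the reflection-symmetry characterization of bad potentials (Proposition~\ref{Proposition-L-prime}), compute $\PP(V\text{ fixed by }r_a)=(p^2+(1-p)^2)^{(L-1)/2}$, observe that invariance under two distinct reflections forces $V=\pm I$ because their composite is a generating rotation of $\ZZ/L\ZZ$, and then correct for the overcount. The only cosmetic difference is that the paper does the last step by a direct overcount argument (each constant potential is counted $L$ times, so subtract $(L-1)(p^L+(1-p)^L)$), whereas you run the full inclusion--exclusion and collapse it with the identity $\sum_{k\ge 2}(-1)^k\binom{L}{k}=L-1$; these are of course the same computation. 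Your deferral of the ``if'' direction to the block decomposition under the reflection is exactly what Lemma~\ref{lemma_symmetries} and Corollary~\ref{proposition_sufficient-for-failure} provide, and the ``only if'' direction is precisely Proposition~\ref{Proposition-L-prime}.
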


\subsection{Conventions}

Unless otherwise stated, we work over the real vector space $\RR^n$. Let $\{e_k\}_{k=1}^n$ denote the standard basis and $\langle\cdot,\cdot\rangle$ denote the standard inner product. 
%%%lmk if you want to remove this I want to remove this, I think it's okay to not say it explicitly.
%(When vectors in $\CC^n$ are involved, we will take the inner product to be conjugate linear in the second component). 
We write $I$ for the $n\times n$ identity matrix. We denote the entries of a vector $\varphi$ in the standard basis by $\varphi(i) = \langle \varphi, e_i \rangle$. Likewise, if $A$ is a matrix, we denote the entries of $A$ by $A(i,j) = \langle e_i, A e_j \rangle$.

\begin{definition}
\label{definition_support-matrix}
    Let $A$ be a real symmetric $n\times n$ matrix. We define the \textit{support} of $A$, $\supp(A)$, as the simple graph on $n$ vertices satisfying that there is an edge $i\sim j$ if and only if $A(i,j)=A(j,i)\ne 0$ for $i\ne j$.
    % \lior[]{for $i\ne j$ (although it is redundent because you said simple graph, its worth emphesize with minimal cost)}
\end{definition}

For example, if $G$ is a graph and $\Delta$ is its Laplacian matrix, $\supp(\Delta) = G$.

\begin{definition}
\label{definition_support-vector}
    Let $G$ be a graph with $n$ vertices and $\varphi$ a vector in $\RR^n$. We define the \textit{support} of $\varphi$ as $\supp(\varphi) = \{ i \in G | \varphi(i) \not= 0 \}$.
\end{definition}

\begin{definition}
\label{definition_graph-distance}
    Let $G$ be a connected graph with $n$ vertices and let $A_G$ be its adjacency matrix. Let $i$ and $j$ be vertices of $G$. Then, we denote by $d(i,j)$ the distance from $i$ to $j$. That is, 
    \[ d(i,j) := \min\{k\in\NN\mid A_G^k(i,j)\ne0\}. \]%\lior[]{ for the distance you want adjacency matrix, not Laplacian, otherwise there might be cancelations. }
    Moreover, for any vertex $i$ and $r\ge 0$, we denote by $B_r(i)$ the ball of radius $r$ around $i$. That is, 
    \[ B_r(i) := \{j\in G\mid d(i,j)\le r\}. \]
    %\lior[]{define important things like distance and ball in equation enviroment, not in line.}
\end{definition}

\section{Continuous Probability Distributions}
\label{section_continuous}

\subsection{General Perturbations}
\label{subsection_general-perturbations}

In order to study matrices of the form $\Delta + t V$ where $V$ is diagonal, it will prove useful to analyze more general classes of matrices $A+tB$. In all that follows, define the one-parameter family
\[ H_t(A,B) := A+tB. \]
When the matrix $A$ is clear from context, we abbreviate to $H_t(B)$. If $A$ and $B$ are both clear, we abbreviate to $H_t$.

Our work is based on the following lemma for analytic one-parameter families of matrices, due to Kato (Theorem 1.8 in Chapter II of \cite{kato2013perturbation}) and Rellich (Theorem 1 in \cite{rellich1969perturbation}): 

\begin{lemma}[Kato-Rellich]
\label{lemma_kato-rellich}
    A one-parameter family of normal matrices $A_t$ that depends analytically on $t$ admits an orthonormal basis of eigenvectors that depends analytically on $t$. Furthermore, their corresponding eigenvalues also depend analytically on $t$.
\end{lemma}

We are interested in the spectral behavior of $H_t(A,B)$. We separate pairs of matrices into ``good'' pairs and ``bad'' pairs. 

\begin{definition}[Good/Bad Pairs]
\label{definition_good-bad-pairs}
    Let $A$ and $B$ be real symmetric $n\times n$ matrices. We say that $(A,B)$ is a good pair if $H_t(A,B)$ has simple eigenvalues and non-vanishing eigenvectors for all but finitely many $t$. We say that $(A,B)$ is a bad pair if $H_t$ fails to satisfy at least one of these conditions for any $t\in\RR$.
\end{definition}

As mentioned in the introduction, it is not obvious that every pair of matrices is either good or bad. The following proposition exploits the analiticity given by Lemma \ref{lemma_kato-rellich} to show good and bad pairs is all there is:
 
%First, we prove that, generically, this behavior depends only on $A$ and $B$. 

\begin{comment}
\begin{proposition}
\label{proposition_all-but-finitely}
    Let $A$ and $B$ be real symmetric $n\times n$ matrices. Then, either $H_t(A,B)$ has simple eigenvalues and non-vanishing eigenvectors for all but finitely many $t\in\RR$ or it fails to satisfy at least one of these conditions for all $t$.
\end{proposition}
\end{comment}
\begin{proposition}
\label{proposition_all-but-finitely}
    Let $A$ and $B$ be real symmetric $n\times n$ matrices. Let $\{\varphi_{k,t}\}_{k=1}^n$ be the analytic orthonormal basis of eigenvectors of $H_t(A,B)$ given by Lemma \ref{lemma_kato-rellich} and $\{\lambda_{k,t}\}_{k=1}^n$ the corresponding analytic eigenvalues. Take any $i,k\ne k'\in\{1,\ldots,n\}$. If either of $\lambda_{k,t}-\lambda_{k',t}\ne0$ or $\varphi_{k,t}(i)\ne0$ holds at some $t$, then it holds for all but finitely many $t\in\RR$.
\end{proposition}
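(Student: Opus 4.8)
The plan is to upgrade the real-analyticity furnished by Lemma~\ref{lemma_kato-rellich} to an algebraic statement. A real-analytic function on $\RR$ can have infinitely many zeros (e.g.\ $\sin t$), so analyticity by itself says nothing about the behaviour near $\pm\infty$; the extra leverage is that $A$ and $B$ have constant entries, which forces the functions $\lambda_{k,t}-\lambda_{k',t}$ and $\varphi_{k,t}(i)$ to be algebraic over $\RR(t)$, and a nonzero algebraic function has only finitely many zeros.

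The first step is to isolate the following elementary fact: if $f\colon\RR\to\RR$ is real-analytic and $P(t,f(t))=0$ for all $t$ for some nonzero $P\in\RR[t,y]$, then $f\equiv0$ or $f$ has finitely many zeros. Indeed, assuming $f\not\equiv0$, choose such a $P$ with $\deg_y P$ minimal and write $P(t,y)=y^mQ(t,y)$ with $m\ge0$ maximal, so $Q(t,0)\not\equiv0$. Since $f$ is analytic and not identically zero, its zeros are isolated, so $\{t:f(t)\ne0\}$ is dense; on it, $f(t)^mQ(t,f(t))=0$ gives $Q(t,f(t))=0$, hence $Q(t,f(t))\equiv0$ by analyticity. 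Minimality of $\deg_y P$ now forces $m=0$, i.e.\ $P(t,0)\not\equiv0$, and every zero of $f$ is a zero of the nonzero one-variable polynomial $P(t,0)$. Note that this disposes of $t\to\pm\infty$ for free.

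The eigenvalue half is then immediate: $\lambda_{k,t}$ is a root of $\det(\lambda I-A-tB)\in\RR[t,\lambda]$, hence algebraic over $\RR(t)$, so $\lambda_{k,t}-\lambda_{k',t}$ is algebraic and, by Lemma~\ref{lemma_kato-rellich}, real-analytic; apply the elementary fact. For the eigenvector half I would work with $\varphi_{k,t}(i)^2$, which has the same zero set. By the eigenvalue dichotomy just proved, either $\lambda_{k,t}\equiv\lambda_{k',t}$ for some $k'\ne k$ (the degenerate case, below) or $\lambda_{k,t}$ is a simple eigenvalue of $H_t$ for all $t$ outside a finite set $Z$. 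In the latter case, for $t\notin Z$ the matrix $M_t:=H_t-\lambda_{k,t}I$ has rank $n-1$, so its adjugate is symmetric of rank one and equals $\gamma(t)\,\varphi_{k,t}\varphi_{k,t}^{\top}$ with $\gamma(t)=\mathrm{tr}\,\mathrm{adj}(M_t)\ne0$; both $\mathrm{adj}(M_t)(i,i)=\gamma(t)\varphi_{k,t}(i)^2$ and $\gamma(t)$ are polynomials in $t$ and $\lambda_{k,t}$, hence algebraic over $\RR(t)$, so $\varphi_{k,t}(i)^2$ is algebraic on $\RR\setminus Z$. As $\varphi_{k,t}(i)^2$ is real-analytic on all of $\RR$ and $\RR\setminus Z$ is dense, the polynomial identity it satisfies extends to $\RR$, and the elementary fact finishes the argument.

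The step I expect to be the main obstacle is the degenerate case, where some eigenvalue branch has constant multiplicity $\ge2$: then $\mathrm{adj}(M_t)\equiv0$ and, since $H_t$ acts as a scalar on the corresponding eigenspace, the analytic frame Kato--Rellich produces inside it is not pinned down algebraically. To handle it I would note that the spectral projection onto such a branch is algebraic (a contour integral of the rational resolvent $(z-H_t)^{-1}$), so one may take the analytic orthonormal eigenbasis inside its range to consist of algebraic vectors --- e.g.\ by Gram--Schmidt on the projections of suitable standard basis vectors --- and rerun the previous paragraph; alternatively, the substitution $s=1/t$ turns $H_t$ into the family $sA+B$, real-analytic and self-adjoint at $s=0$ with the same eigenvectors, which controls $\varphi_{k,t}$ as $|t|\to\infty$ directly. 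This case is in any event immaterial for the applications, since a pair exhibiting a permanent degeneracy is automatically bad.
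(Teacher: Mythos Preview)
Your argument is correct and takes a genuinely different route from the paper. The paper does not invoke algebraicity at all: it reparametrizes via the auxiliary family $A_s=(1-s)A+sB$ for $s\in[0,1]$, observes that $A_s=(1-s)H_{s/(1-s)}$, and concludes that $\varphi_{k,s/(1-s)}(i)$ and $(1-s)\bigl(\lambda_{k,s/(1-s)}-\lambda_{k',s/(1-s)}\bigr)$ are real-analytic on the compact interval $[0,1]$, where a nonzero analytic function has only finitely many zeros; a symmetric argument covers $t\le0$. This compactification is precisely your second fallback suggestion (the substitution $s=1/t$) promoted to the main idea, and it treats all eigenvalue branches in one stroke with no case split. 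Your approach, by contrast, explains \emph{why} the functions are tame at infinity---they satisfy polynomial relations over $\RR(t)$---and the adjugate trick for recovering $\varphi_{k,t}(i)^2$ algebraically from $H_t-\lambda_{k,t}I$ is a nice, reusable device. Both arguments share the same soft spot in the permanently degenerate case: the Kato--Rellich frame inside a repeated eigenspace is not canonically determined, so neither its algebraicity nor its analytic extension to $s=1$ is automatic for an \emph{arbitrary} analytic choice of that frame; but as you correctly note, such a pair is already bad and the point is immaterial for the applications.
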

\begin{proof}
    %By Lemma \ref{lemma_kato-rellich}, $H_t(A,B)$ admits an orthonormal basis of eigenvectors $\{\varphi_{k,t}\}_{k=1}^n$ and corresponding eigenvalues $\{\lambda_{k,t}\}_{k=1}^n$ that depend analytically on $t$. Observe that $H_t$ fails to have simple eigenvalues if and only if there exist distinct $k,k'\in\{1,\ldots,n\}$ such that
    %\[ \Lambda_t(k,k') := \lambda_{k,t} - \lambda_{k',t} = 0. \]
    %Similarly, $H_t$ has a vanishing eigenvector if there exist $i,k\in\{1,\ldots,n\}$ such that
    %\[ \varphi_{k,t}(i) = 0. \]
    %If any of the functions $\Lambda_t(k,k'), \varphi_{k,t}(i)$ is identically zero, then $H_t$ fails to satisfy the corresponding condition for all $t\in\RR$. Now, suppose the contrary. Consider the auxiliary one-parameter perturbation
    Consider the auxiliary one-parameter perturbation
    \[ A_s = (1-s)A + sB \]
    for $s\in[0,1]$. Observe that
    \[ A_s = (1-s)\left(A + \frac{s}{1-s}B\right) = (1-s)H_{s/(1-s)}. \]
    Therefore, $\{\varphi_{k,s/(1-s)}\}_{k=1}^n$ is an orthonormal eigenbasis of $A_s$ with corresponding eigenvalues $\{(1-s)\lambda_{k,s/(1-s)}\}_{k=1}^n$. Now, notice that the functions $(1-s)(\lambda_{k,s/(1-s)}-\lambda_{k',s/(1-s)})$ and $\varphi_{k,s/(1-s)}(i)$ are analytic functions of $s$ on the interval $[0,1]$. By compactness, these functions either have finitely many zeros or vanish for all $s\in[0,1]$. Hence, if either of the functions $\lambda_{k,t}-\lambda_{k',t}$ or $\varphi_{k,t}(i)$ is not identically zero, it has finitely many zeros for $t\in[0,\oo)$. An analogous computation gives finitely many zeros for $t\in(-\oo,0]$. Taking the union, we get finitely many zeros in $\RR$.
\end{proof}

%\textbf{Remark.} In fact, this proof shows that we can verify each condition independently. That is, if $H_t$ does not have simple eigenvalues for all $t\in\RR$, then we can find $k\ne k'$ such that $\lambda_{k,t}=\lambda_{k',t}$ for all $t$. Similarly, if $H_t$ has a vanishing eigenvector, $\varphi_{k(t),t}(i(t))=0$, for all $t$, then we can take $k(t)\equiv k$ and $i(t)\equiv i$ to be the same for all $t$. This observation will be useful in future proofs.

%With this proposition in mind, we separate pairs of matrices into ``good'' pairs and ``bad'' pairs. 

\begin{comment}
\begin{definition}[Good/Bad Pairs]
\label{definition_good-bad-pairs}
    Let $A$ and $B$ be real symmetric $n\times n$ matrices. We say that $(A,B)$ is a good pair if $H_t(A,B)$ has simple eigenvalues and non-vanishing eigenvectors for all but finitely many $t$. We say that $(A,B)$ is a bad pair if $H_t$ fails to satisfy at least one of these conditions for all $t$.
\end{definition}
\end{comment}
%\lior[]{change the order, definition 2.3, and then proposition 2.2 and write it accordingly. Include the remark as part of the proposition, namely state it for each of the conditions separately. }

\textbf{Remark}: From the auxiliary perturbation in the proof of Proposition \ref{proposition_all-but-finitely}, we learn that our conditions are invariant under nonzero scalings. In particular, $(A,B)$ is a good (resp. bad) pair if and only if $(B,A)$ is a good (resp. bad) pair.

Note that a real symmetric matrix $A$ has non-vanishing eigenvectors if and only if $\supp(\varphi) = \supp(A)$ for all eigenvectors $\varphi$. The following lemma characterizes the behavior of $\supp(\varphi)$ for some special perturbations:

\begin{lemma}
\label{lemma_support}
    Let $A$ be a real symmetric $n\times n$ matrix with connected support and let $D$ be a diagonal matrix with distinct diagonal real entries $x_1,\ldots,x_n$. Let 
    \[ \varphi_{k,t} = e_k + \sum_{j=1}^\oo t^j\varphi_k^{(j)} \]
    be the analytic orthonormal basis of eigenvectors of $H_t(D,A)$ given by Lemma \ref{lemma_kato-rellich}. Then, for all $j$, we have 
    \[ \supp\left(\varphi_k^{(j)}\right) \subseteq B_j(k). \]
    Furthermore, for $i\notin B_{j-1}(k)$,
    \begin{align}
        \label{equation_perturbation-entry}
        \varphi_k^{(j)}(i) = (C^j)(i,k),
    \end{align}
    where $C = -(D-x_kI)^+A$ and $(D-x_kI)^+$ is the diagonal matrix with $0$ at the $k-$th entry and $(x_m-x_k)^{-1}$ at the $m-$th entry, for $m\ne k$.
\end{lemma}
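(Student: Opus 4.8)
The plan is to run Rayleigh--Schrödinger perturbation theory and track how fast supports spread. Since $D$ has distinct diagonal entries, $H_0=D$ has simple spectrum with eigenvectors $e_1,\dots,e_n$, which is why the analytic family of Lemma~\ref{lemma_kato-rellich} can be normalized so that $\varphi_{k,0}=e_k$, as in the statement. Writing $\varphi_{k,t}=\sum_{j\ge 0}t^j\varphi_k^{(j)}$ and $\lambda_{k,t}=\sum_{j\ge 0}t^j\lambda_k^{(j)}$ with $\varphi_k^{(0)}=e_k$ and $\lambda_k^{(0)}=x_k$, I would substitute into $(D+tA)\varphi_{k,t}=\lambda_{k,t}\varphi_{k,t}$ and collect the coefficient of $t^j$ (for $j\ge 1$), obtaining $(D-x_kI)\varphi_k^{(j)}=-A\varphi_k^{(j-1)}+\sum_{m=1}^{j}\lambda_k^{(m)}\varphi_k^{(j-m)}$. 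Applying $(D-x_kI)^{+}$ to both sides --- which annihilates $e_k$, so it kills the $m=j$ term $\lambda_k^{(j)}e_k$, and satisfies $(D-x_kI)^{+}(D-x_kI)\varphi_k^{(j)}=\varphi_k^{(j)}-\varphi_k^{(j)}(k)e_k$ --- yields the recursion
\[ \varphi_k^{(j)} = \varphi_k^{(j)}(k)\,e_k + C\varphi_k^{(j-1)} + \sum_{m=1}^{j-1}\lambda_k^{(m)}(D-x_kI)^{+}\varphi_k^{(j-m)}, \]
with $C=-(D-x_kI)^{+}A$. The scalar $\varphi_k^{(j)}(k)$ is fixed by normalization, but its value is irrelevant; all that matters is that this first term is supported on $\{k\}$.

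For the inclusion $\supp(\varphi_k^{(j)})\subseteq B_j(k)$ I would induct on $j$, with base case $\varphi_k^{(0)}=e_k$. The locality input is that $C(i,l)\ne 0$ forces $A(i,l)\ne 0$, hence $d(i,l)\le 1$; consequently $C$ sends vectors supported in $B_r(k)$ to vectors supported in $B_{r+1}(k)$, while the diagonal matrix $(D-x_kI)^{+}$ never enlarges a support. Plugging the inductive hypotheses $\supp(\varphi_k^{(m)})\subseteq B_m(k)$ for $m<j$ into the recursion bounds the three terms by (subsets of) $B_0(k)$, $B_j(k)$, and $B_{j-1}(k)$ respectively, all contained in $B_j(k)$.

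For the identity (\ref{equation_perturbation-entry}) I would prove, by strong induction on $j$, the slightly cleaner statement that $\varphi_k^{(j)}(i)=(C^j)(i,k)$ whenever $d(i,k)\ge j$ (the base case $j=0$ being $\varphi_k^{(0)}=e_k=C^0e_k$). Evaluating the recursion at such an $i$: since $d(i,k)\ge j\ge 1$ we have $i\ne k$, so the $e_k$-term drops; and since $d(i,k)\ge j>j-m$ for each $m\ge 1$, the first part of the lemma gives $\varphi_k^{(j-m)}(i)=0$, so every $\lambda$-term drops too. Hence $\varphi_k^{(j)}(i)=(C\varphi_k^{(j-1)})(i)=\sum_l C(i,l)\varphi_k^{(j-1)}(l)$. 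The only indices $l$ that contribute have $C(i,l)\ne 0$, hence $d(i,l)\le 1$, hence $d(l,k)\ge d(i,k)-1\ge j-1$; for those $l$ the inductive hypothesis gives $\varphi_k^{(j-1)}(l)=(C^{j-1})(l,k)$, and since the remaining terms of $\sum_l C(i,l)(C^{j-1})(l,k)$ vanish anyway, this sum equals $(C^j)(i,k)$.

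Most of this is routine perturbative bookkeeping. The one step needing care is the last one: I have to be sure that the lower-order corrections $\varphi_k^{(j)}(k)e_k$ and $\lambda_k^{(m)}(D-x_kI)^{+}\varphi_k^{(j-m)}$ really cannot reach the outer shell $\{i:d(i,k)\ge j\}$ --- which is exactly what the first part of the lemma provides --- and that the distance-$\le 1$ support of $C$ is precisely what lets the level-$(j-1)$ hypothesis cover every index $l$ that $C$ links to $i$. I expect this bookkeeping of which shell each term lives on to be the main (and fairly mild) obstacle; everything else follows the standard perturbation-theory script.
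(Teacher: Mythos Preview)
Your proposal is correct and follows essentially the same argument as the paper: both derive the Rayleigh--Schr\"odinger recursion $\varphi_k^{(j)}=C\varphi_k^{(j-1)}+v^{(j)}$ with $v^{(j)}$ supported in $B_{j-1}(k)$, induct on $j$ for the support inclusion via the locality of $C$, and then read off the formula at entries $i$ with $d(i,k)\ge j$ by observing that $v^{(j)}(i)=0$ there and that $C$ only couples $i$ to indices $l$ with $d(l,k)\ge j-1$, so the inductive hypothesis applies. The only cosmetic difference is that the paper runs a single induction establishing both claims simultaneously, whereas you separate them into two passes; the content is identical.
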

\begin{proof}
    Denote by
    \[ \lambda_{k,t} = x_k + \sum_{j=1}^\oo t^j\lambda_k^{(j)} \]
    the analytic eigenvalues corresponding to $\varphi_{k,t}$. Consider the eigenvalue equation
    \[ (D+tA)\varphi_{k,t} = \lambda_{k,t}\varphi_{k,t}.\]
    Substituting our power series expansions and equating terms with the same power of $t$, we find that there exist real constants $c^{(j)}$ and vectors 
    \[ v^{(j)}=\sum_{m=1}^j\lambda_k^{(m)}(D-x_kI)^+\varphi_k^{(j-m)} + c^{(j)}e_k\]
    such that\footnote{See, for example, Lemma 2.3 in \cite{alon2024average}}
    \[ \varphi_k^{(j)} = C\varphi_k^{(j-1)} + v^{(j)}. \]
    We proceed by induction on $j$. Note that $\varphi_k^{(0)} = e_k$, so $\supp\left(\varphi_k^{(0)}\right)=\{k\}=B_0(k)$. Additionally, $(C^0)(i,k)=\delta_{i,k}=e_k(i)$. This establishes our base case.

    Now, assume the result holds for all $j'<j$. Since $(D-x_kI)^+$ is diagonal, we have
    \begin{align}
    \label{equation_supp-vj}
    \supp\left(v^{(j)}\right) \subseteq \bigcup_{r=0}^{j-1}\supp\left(\varphi_k^{(r)}\right)\subseteq B_{j-1}(k),
    \end{align}
    where the last inclusion is by the induction hypothesis. Next,
    \[ \left(A\varphi_k^{(j-1)}\right)(i) = \sum_{r\sim i}A(i,r)\varphi_k^{(j-1)}(r), \]
    where $r\sim i$ means $r$ and $i$ are adjacent in $\supp(A)$. We deduce that the $i-$th entry of $A\varphi_k^{(j-1)}$ can only be nonzero if $i$ is adjacent to some vertex in $\supp\left(\varphi_k^{(j-1)}\right)\subseteq B_{j-1}(k)$. Thus, $\supp\left(A\varphi_k^{(j-1)}\right)\subseteq B_j(k)$. We obtain
    \[ \supp\left(C\varphi_k^{(j-1)}\right)\subseteq\supp\left(A\varphi_k^{(j-1)}\right)\subseteq B_j(k). \]
    Putting this and equation (\ref{equation_supp-vj}) together, we conclude that $\supp\left(\varphi_k^{(j)}\right)\subseteq B_j(k)$.

    % By definition of $C$, $C(i,r)$ can only be nonzero if $A(i,r)$ is nonzero as well.
    % % \lior[]{I think you only defined support for symmetric matrices. You can simply say, by definition $C(i,r)\ne0$ implies $A(i,r)\ne0$ which implies $d(i,r)\le 1$}
    % Therefore, $C(i,r)\ne0$ implies $i\sim r$. So, suppose that $i\notin B_{j-1}(k)$. If $C(i,r)\ne0$, then we require $r\notin B_{j-2}(k)$. 
    Suppose that $i\notin B_{j-1}(k)$. From above, we have that $v^{j}(i)=0$. Moreover, by definition of $C$, if $C(i,r)\ne0$, then $A(i,r)\ne0$. Thus, $d(i,r)\le 1$. By the triangle inequality, we conclude $r\notin B_{j-2}(k)$.
    % \lior[]{rewrite this sentence, its not suppose and require, it should be "if, then". And you can simply say, if $d(i,k)>j-1$ and $C(i,r)\ne0\Rightarrow d(i,r)\le 1$, so triangle inequality says $d(r,k)\ge d(i,k)-1>j-2$. This means that $v^{j}(i)=0$ so $\varphi_k^{(j)}(i)=\left(C\varphi_k^{(j-1)}\right)(r)$, and the induction hypothesis give $\varphi_k^{(j-1)}(r)=C^{j-1}(r,k)$. Therefore... }
    % Moreover, we know from above that $v^{(j)}(i)=0$. 
    Using the inductive hypothesis,
    \begin{align*}
        \varphi_k^{(j)}(i) &= \sum_{r\sim i}C(i,r)\varphi_k^{(j-1)}(r) = \sum_{r\sim i}C(i,r)(C^{j-1})(r,k) = (C^j)(i,k).
    \end{align*}
    This concludes the induction.
\end{proof}

We are now ready to prove a useful proposition that gives us an infinite family of good pairs for any finite connected graph. Recall that a finite set $\{x_1,\ldots,x_n\}$ of real numbers is \textit{algebraically independent} if the only polynomial $P\in\QQ[X_1,\ldots,X_n]$ such that $P(x_1,\ldots,x_n)=0$ is the zero polynomial\footnote{Otherwise, it is called \textit{algebraically dependent.}}. 

\begin{proposition}
    \label{proposition_community}
    Let $A$ be a symmetric $n\times n$ matrix with rational entries and connected support. Let $D$ be a real diagonal $n\times n$ matrix with algebraically independent diagonal entries. Then, $(A,D)$ is a good pair.
\end{proposition}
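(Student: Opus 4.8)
The plan is to reduce to the family $H_t(D,A)=D+tA$: by the Remark following Proposition \ref{proposition_all-but-finitely}, $(A,D)$ is a good pair if and only if $(D,A)$ is, so we may work with $D+tA$, for which Lemma \ref{lemma_kato-rellich} supplies analytic families $\varphi_{k,t}=e_k+\sum_{j\ge1}t^j\varphi_k^{(j)}$ and $\lambda_{k,t}=x_k+\sum_{j\ge1}t^j\lambda_k^{(j)}$. By Proposition \ref{proposition_all-but-finitely}, it then suffices to produce, for each $k\ne k'$, a single $t$ with $\lambda_{k,t}\ne\lambda_{k',t}$, and for each pair $(i,k)$, a single $t$ with $\varphi_{k,t}(i)\ne0$; intersecting the resulting finitely many cofinite sets gives simple eigenvalues and all-nonzero eigenvector entries for all but finitely many $t$, and on that cofinite set the eigenvectors are unique up to sign, hence non-vanishing. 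For the eigenvalues, take $t=0$: $\lambda_{k,0}=x_k\ne x_{k'}=\lambda_{k',0}$, since algebraically independent reals are pairwise distinct (else $X_k-X_{k'}$ would witness dependence). For the eigenvectors, the case $i=k$ is trivial since $\varphi_{k,0}(k)=1$, so assume $i\ne k$ and let $r=d(i,k)$ in $\supp(A)$, finite because $\supp(A)$ is connected.

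By Lemma \ref{lemma_support}, $\supp(\varphi_k^{(j)})\subseteq B_j(k)$, so $\varphi_k^{(j)}(i)=0$ for $j<r$; and since $i\notin B_{r-1}(k)$, equation (\ref{equation_perturbation-entry}) gives $\varphi_k^{(r)}(i)=(C^r)(i,k)$ where $C=-(D-x_kI)^+A$. Thus $\varphi_{k,t}(i)=t^r(C^r)(i,k)+O(t^{r+1})$, and it is enough to show $(C^r)(i,k)\ne0$, which then forces $\varphi_{k,t}(i)\ne0$ for all sufficiently small nonzero $t$.

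To evaluate $(C^r)(i,k)$, expand it as a sum over walks $i=a_0\sim a_1\sim\cdots\sim a_r=k$ in $\supp(A)$; since $r=d(i,k)$, any closed subwalk could be deleted to shorten the walk, so every such walk is a geodesic with distinct vertices, and in particular $a_j\ne k$ for $j<r$. Because $C(a,b)=-(x_a-x_k)^{-1}A(a,b)$ for $a\ne k$, the contribution of a geodesic $\gamma$ is $(-1)^r\big(\prod_{j=0}^{r-1}(x_{a_j}-x_k)^{-1}\big)\big(\prod_{j=0}^{r-1}A(a_j,a_{j+1})\big)$, which is well defined and, as a rational function of $x_1,\dots,x_n$, has nonzero coefficient $(-1)^r\prod_{\text{edges }e\in\gamma}A(e)$ (every edge of a geodesic lies in $\supp(A)$). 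Pulling out the common factor $(x_i-x_k)^{-1}$, we see $(C^r)(i,k)$ is a $\QQ$-linear combination, with nonzero coefficients, of the rational functions $g_S:=\prod_{v\in S}(x_v-x_k)^{-1}$, as $S$ ranges over the intermediate-vertex sets of the geodesics from $i$ to $k$; these sets are pairwise distinct since in a geodesic the vertex at distance $j$ from $i$ is determined by $S$. A partial-fractions argument — given a relation $\sum_S c_Sg_S=0$, pick a variable $x_v$ ($v\ne k$) occurring in some $S$, multiply by $x_v-x_k$, specialize $x_v\mapsto x_k$, and induct on the number of occurring variables — shows the $g_S$ (with $k\notin S$) are $\QQ$-linearly independent in $\QQ(x_1,\dots,x_n)$. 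Hence $(C^r)(i,k)$ is a nonzero element of $\QQ(x_1,\dots,x_n)$; clearing denominators yields a nonzero polynomial with rational coefficients, which by algebraic independence of $x_1,\dots,x_n$ does not vanish at our point. Therefore $(C^r)(i,k)\ne0$, completing the argument.

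The main obstacle is precisely the non-vanishing of $(C^r)(i,k)$: when several geodesics join $i$ to $k$ their contributions might conceivably cancel, and the point is that the simple-fraction rational functions $g_S$ attached to distinct geodesics have distinct polar divisors and so cannot cancel — this is the content of the linear-independence step, after which algebraic independence of the $x_m$ upgrades ``nonzero rational function'' to ``nonzero value.''
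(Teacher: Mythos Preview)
Your argument is correct and follows the paper's proof almost verbatim: reduce to $(D,A)$ via the Remark, handle simplicity at $t=0$, and for each pair $(i,k)$ use Lemma \ref{lemma_support} at $j=d(i,k)$ to reduce to showing $(C^{d(i,k)})(i,k)\ne0$, expanded as a sum over geodesics. The only difference is in this last non-vanishing step: the paper clears denominators to obtain the polynomial $R(X_1,\dots,X_n)=\sum_{\gamma}A_\gamma\prod_{r\notin\gamma}(X_k-X_r)$ and shows $R\not\equiv0$ by the explicit evaluation $X_r=1$ for $r$ on a fixed geodesic $\gamma'$ and $X_r=0$ otherwise (which kills every term except $A_{\gamma'}$), whereas you establish the same non-vanishing by proving the partial-fraction functions $g_S=\prod_{v\in S}(x_v-x_k)^{-1}$ are $\QQ$-linearly independent. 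Both arguments are valid and short; the paper's evaluation trick is a touch more direct, while your linear-independence argument is the natural abstract formulation of the same phenomenon (distinct polar divisors). One small remark: your induction on ``number of occurring variables'' really proves linear independence of \emph{all} $g_S$ with $S\subseteq\{1,\dots,n\}\setminus\{k\}$, not just those coming from geodesics, since the inductive step produces sets $S\setminus\{v\}$ that need not be geodesic vertex-sets --- this is fine, but worth stating explicitly.
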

\begin{proof}
    Let $x_1,\ldots, x_n$ be the entries of $D$, so that $D = \diag(\{x_1,\ldots,x_n\})$. Let $\{\varphi_{k,t}\}_{k=1}^n$ be an analytic eigenbasis as specififed in Lemma \ref{lemma_support}.
    
    By the remark below Proposition \ref{proposition_all-but-finitely}, it suffices to show $(D,A)$ is a good pair. Observe that $H_0=D$ has simple eigenvalues. Thus, by Proposition \ref{proposition_all-but-finitely}, $H_t$ has simple eigenvalues for all but finitely many $t\in\RR$.

    Now, for each $i,k\in\{1,\ldots,n\}$, let $j=d(i,k)$. We show $\varphi_k^{(j)}(i)\ne0$. From Lemma \ref{lemma_support}, we know that $\varphi_k^{(j)}(i)=(C^j)(i,k)$. As $C(i,r)\ne 0$ implies $d(i,r)\le 1$,
    \[ (C^j)(i,k) = \sum_{i_1\sim i}\sum_{i_2\sim i_1}\ldots\sum_{i_{j-1}\sim i_{j-2}}C(i,i_1)C(i_1,i_2)\cdots C(i_{j-2},i_{j-1})C(i_{j-1},k). \]
    For each term in the sum above, $i\sim i_1\sim\ldots\sim i_{j-1}\sim k$ is a path of minimal length from $i$ to $k$. Let $\Gamma(i,k)$ be the set of minimal length paths from $i$ to $k$ in $\supp(A)$. For $\alpha\in\Gamma(i,k)$, we abbreviate
    \[ A_\alpha := A(\alpha(0),\alpha(1))\cdot A(\alpha(1),\alpha(2))\cdots A(\alpha(j-1),\alpha(j)). \]
    Thus,
    \[ \varphi_k^{(j)}(i) = \sum_{\gamma\in\Gamma(i,k)}C_\gamma = \sum_{\gamma\in\Gamma(i,k)}A_\gamma\prod_{r\in\gamma}(x_k-x_r)^{-1}. \]
    Assume for the sake of contradiction that $\varphi_k^{(j)}(i)=0$. Then, clearing denominators, we obtain that
    \[ R(x_1,\ldots,x_n) := \sum_{\gamma\in\Gamma(i,k)}A_\gamma\prod_{r\notin\gamma}(x_k-x_r) = 0. \]
    Observe that $R$ is a polynomial with rational coefficients. Since $x_1,\ldots,x_n$ are algebraically independent, it follows $R$ is identically zero. Pick any $\gamma'\in\Gamma(i,k)$. Then, $R$ evaluated at
    \[ \begin{cases}
        X_r = 1, & \text{if } r\in\gamma' \\
        X_r = 0, & \text{otherwise}
    \end{cases} \]
    equals $A_{\gamma'}\ne0$. Thus, $R$ is not identically $0$, giving the required contradiction.

    Since $\varphi_k^{(j)}(i)$ is nonzero, we conclude that $\varphi_{k,t}(i)$ is not identically $0$. Therefore, by Proposition \ref{proposition_all-but-finitely}, $H_t$ has non-vanishing eigenvectors for all but finitely many $t\in\RR$.
\end{proof}

\subsection{Proof of Theorem \ref{theorem_continuous-prob}}

With the above proposition, we can now prove our main result for continuous probability distributions on the entries of $V$.

\theoremcontinuousprob*

First, recall the following result on algebraically independent sets:

\begin{lemma}
\label{lemma_algebraic-independence}
    Let $B:=\{(x_1,\ldots,x_n)\in\RR^n\mid \{x_1,\ldots,x_n\} \text{ is algebraically dependent}\}$. Then, $B$ has Lebesgue measure $0$.
\end{lemma}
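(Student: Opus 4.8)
The plan is to write the bad set $B$ as a countable union of zero sets of nonzero polynomials and then invoke the standard fact that the zero set of a nonzero polynomial in $\RR^n$ has Lebesgue measure zero. Concretely, the ring $\QQ[X_1,\ldots,X_n]$ is countable, so enumerate its nonzero elements as $P_1, P_2, \ldots$. A tuple $(x_1,\ldots,x_n)$ lies in $B$ precisely when there is a nonzero rational polynomial vanishing at it, so
\[
    B = \bigcup_{m=1}^\infty Z(P_m), \qquad Z(P_m) := \{x\in\RR^n \mid P_m(x) = 0\}.
\]
Since a countable union of null sets is null, it suffices to show each $Z(P_m)$ has measure zero.

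The one substantive step is therefore: if $P\in\RR[X_1,\ldots,X_n]$ is not the zero polynomial, then $Z(P)$ has Lebesgue measure zero. I would prove this by induction on $n$. For $n=1$ a nonzero polynomial has finitely many roots, hence a finite (measure-zero) zero set. For the inductive step, write $P$ as a polynomial in $X_n$ with coefficients in $\RR[X_1,\ldots,X_{n-1}]$; since $P\ne 0$, at least one coefficient, say $c(X_1,\ldots,X_{n-1})$, is a nonzero polynomial. For any fixed $(x_1,\ldots,x_{n-1})$ with $c(x_1,\ldots,x_{n-1})\ne 0$, the polynomial $P(x_1,\ldots,x_{n-1},X_n)$ is a nonzero univariate polynomial, so it has only finitely many roots in $X_n$; thus its slice of $Z(P)$ is finite. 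By the $n=1$ base case (or directly), the exceptional set $\{c = 0\}\subseteq\RR^{n-1}$ has measure zero by the inductive hypothesis. Applying Fubini--Tonelli to the indicator of $Z(P)$, the measure of $Z(P)$ is the integral over $\RR^{n-1}$ of the ($X_n$-)measure of each slice; this integrand is zero off the null set $\{c=0\}$, so the whole integral vanishes. Hence $Z(P)$ is null.

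I do not expect a genuine obstacle here — this is a routine measure-theoretic fact. The only point requiring a little care is the bookkeeping in the Fubini argument (measurability of $Z(P)$, which is automatic since it is closed, and the fact that a finite set in $\RR$ has one-dimensional Lebesgue measure zero). Assembling the pieces: each $Z(P_m)$ is null, $B$ is their countable union, so $B$ is null, completing the proof.
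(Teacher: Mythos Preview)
Your proof is correct and follows essentially the same approach as the paper: write $B$ as the countable union over nonzero $P\in\QQ[X_1,\ldots,X_n]$ of the zero sets $Z(P)$ and use that each $Z(P)$ has Lebesgue measure zero. The only difference is that the paper cites this last fact from the literature, whereas you supply the standard induction-with-Fubini argument inline.
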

\begin{proof}
    For a nonzero polynomial $P\in\QQ[X_1,\ldots,X_n]$, denote by $Z(P)$ its zero set. According to \cite{zeroset}, $Z(P)$ has measure $0$. Then, $B = \bigcup_{P}Z(P)$
    is a countable union of measure $0$ sets, so it has measure $0$ itself.
\end{proof}

Our result is now immediate:

\begin{proof}[Proof of Theorem \ref{theorem_continuous-prob}]
    %Recall that because $G$ is a connected graph, $\Delta$ is a real symmetric matrix with rational entries and connected support. 
    Suppose $V$ is bad. Then, Proposition \ref{proposition_community} implies that $V$ has algebraically dependent diagonal entries. However, by Lemma \ref{lemma_algebraic-independence}, the set of such $V$ has Lebesgue measure $0$. Since $\mu$ is absolutely continuous with respect to the Lebesgue measure, the result follows.
\end{proof}

\section{Symmetry Results}
\label{section_symmetry}

In the case of discrete probability distributions, the probability of obtaining a bad potential will, in general, be nonzero. We find bounds on these probabilities by finding conditions that will make a pair $(A,B)$ bad. We begin by proving the following useful extension of Lemma \ref{lemma_kato-rellich}:

\begin{lemma}
    \label{lemma_kato-rellich vectors for symmetries}
    Let $A$ and $B$ be real symmetric $n\times n$ matrices and let $O$ be an orthogonal $n\times n$ matrix that commutes with both $A$ and $B$. Then there exists an orthonormal basis $\{\varphi_{k,t}\}_{k=1}^n$ of simultaneous complex eigenvectors of $H_t(A,B)$ and $O$ that depends analytically on $t$.
\end{lemma}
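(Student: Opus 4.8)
The plan is to reduce to Lemma \ref{lemma_kato-rellich} by decomposing $\CC^n$ into the eigenspaces of $O$ and treating each piece separately. Since $O$ is orthogonal, it is in particular normal, so $\CC^n$ splits as an orthogonal direct sum $\CC^n = \bigoplus_{\lambda} E_\lambda$, where $\lambda$ ranges over the finitely many distinct (unit-modulus) eigenvalues of $O$ and $E_\lambda = \ker(O - \lambda I)$. Because $A$ and $B$ each commute with $O$, each of them preserves every $E_\lambda$; hence so does $H_t(A,B) = A + tB$ for every $t$. Thus $H_t$ restricts to a one-parameter family of operators $H_t|_{E_\lambda}$ on each $E_\lambda$, and this is precisely the point at which the hypotheses $OA = AO$ and $OB = BO$ are used.

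Next I would note that, after fixing any orthonormal basis of $E_\lambda$, the family $H_t|_{E_\lambda}$ is represented by a $(\dim E_\lambda)\times(\dim E_\lambda)$ matrix depending analytically — in fact affinely — on $t$, and that this matrix is Hermitian: $A$ and $B$ are real symmetric, hence Hermitian on $\CC^n$, and Hermiticity is inherited by the restriction of an operator to an invariant subspace. So $H_t|_{E_\lambda}$ is a one-parameter analytic family of normal matrices, and Lemma \ref{lemma_kato-rellich} yields an orthonormal basis $\{\varphi^\lambda_{k,t}\}_k$ of $E_\lambda$ consisting of eigenvectors of $H_t|_{E_\lambda}$ that depend analytically on $t$. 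Since every such $\varphi^\lambda_{k,t}$ lies in $E_\lambda$, it is automatically an eigenvector of $O$ with eigenvalue $\lambda$, hence a simultaneous eigenvector of $H_t$ and $O$.

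Finally I would assemble the global basis by taking the union $\{\varphi^\lambda_{k,t}\}_{\lambda,k}$ over all eigenvalues $\lambda$ of $O$. Vectors from the same $E_\lambda$ are orthonormal by construction, vectors from different eigenspaces are orthogonal because $O$ is normal, and together they span $\CC^n$ since the $E_\lambda$ do; each depends analytically on $t$. Re-indexing by a single index $k\in\{1,\dots,n\}$ produces the desired analytic orthonormal basis $\{\varphi_{k,t}\}_{k=1}^n$ of simultaneous eigenvectors of $H_t(A,B)$ and $O$.

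The argument is essentially bookkeeping once the spectral decomposition of $O$ is in place. The only points requiring a little care are the legitimacy of invoking Lemma \ref{lemma_kato-rellich} on the subspaces $E_\lambda$ — that is, verifying that both analytic dependence on $t$ and normality survive the restriction to an $H_t$-invariant subspace and the choice of a basis — and the $H_t$-invariance of $E_\lambda$ itself; I do not expect any serious obstacle beyond these routine checks.
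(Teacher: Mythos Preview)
Your proposal is correct and follows essentially the same route as the paper: decompose $\CC^n$ into the eigenspaces of the normal operator $O$, observe that each eigenspace is $H_t$-invariant because $O$ commutes with $A$ and $B$, apply Lemma \ref{lemma_kato-rellich} to the Hermitian restriction $H_t|_{E_\lambda}$, and assemble the resulting analytic orthonormal bases. The paper's proof differs only in presentation, not in substance.
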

\begin{proof}
    For this proof we work over the complex vector space $\CC^n$. Since $O$ commutes with $A$ and $B$, it also commutes with $H_t(A,B)$ for all $t$. As $O$ and $H_t$ are both normal matrices and they commute, the eigenspaces of $O$ are $H_t$ invariant. That is, if $\{U_i\}_{i=1}^k$ are the eigenspaces of $O$, then $H_t$ is block diagonal in the decomposition
    \[ \CC^n = \bigoplus_{i=1}^kU_i. \]
    Let $H^i_t := H_t|_{U_i}$ and $d_i:=\dim(U_i)$. Since $H_t$ is real symmetric, $H^i_t: U_i \rightarrow U_i$ is Hermitian. Thus, by Lemma \ref{lemma_kato-rellich}, $H^i_t$ admits an orthonormal basis of eigenvectors $\{\varphi^i_{j,t}\}_{j=1}^{d_i}$ that depends analytically on $t$. These are also eigenvectors of $H_t$. Because they lie in the eigenspace $U_i$, they are also eigenvectors of $O$. As $O$ is normal, the eigenspaces $U_i$ are orthogonal, so the union 
    \[ \bigcup_{i=1}^k\{ \varphi^i_{j,t} \}_{j=1}^{d_i} \]
    is an orthonormal basis of simultaneous eigenvectors of $H_t$ and $O$ that depends analytically on $t$.
\end{proof}

We can now study how such \textit{shared symmetries} of $A$ and $B$ impact our spectral conditions:

\begin{lemma}
\label{lemma_symmetries}
    Let $A$ and $B$ be real symmetric $n\times n$ matrices. Assume there exists an orthogonal matrix $O \ne I$ that commutes with both $A$ and $B$.
    \begin{enumerate}[(1)]
        \item If $O^2 \ne I$, then $H_t(A,B)$ has degenerate spectrum for all $t\in\RR$.%\lior[]{say it positively. Has multiple eigenvalues at any $t$.}
        \item If there is some $j$ such that $Oe_j = e_j$, then $H_t(A,B)$ has an eigenvector $\varphi_t$ analytic in $t$ that vanishes at $j$ for all $t\in\RR$. %\lior[]{ has an eigenvector, analytic in $t$, that vanish on $j$ for all $t$. }
    \end{enumerate}
\end{lemma}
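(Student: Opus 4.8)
The plan is to use Lemma~\ref{lemma_kato-rellich vectors for symmetries} to obtain a joint analytic eigenbasis of $H_t = H_t(A,B)$ and $O$, and then exploit the eigenvalue structure of the orthogonal matrix $O$ over $\CC$. For part (1), since $O$ is orthogonal and $O^2 \ne I$, $O$ has an eigenvalue $\lambda \in \CC$ with $\lambda \ne \pm 1$; as $O$ is real, $\bar\lambda$ is also an eigenvalue, and $\lambda \ne \bar\lambda$, so $O$ has at least two distinct eigenspaces $U$ and $\bar U$ with $\dim U = \dim \bar U \ge 1$. Complex conjugation intertwines $U$ and $\bar U$ and commutes with $H_t$ (because $A,B$ are real), so the restrictions $H_t|_U$ and $H_t|_{\bar U}$ are conjugate operators and hence have the same eigenvalues. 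Picking any eigenvalue from this common list gives two eigenvectors of $H_t$ (one in $U$, one in $\bar U$) that are genuinely linearly independent (they lie in different $O$-eigenspaces), so $H_t$ has a repeated eigenvalue. One should remark that a complex eigenvector of the real symmetric matrix $H_t$ yields a repeated \emph{real} eigenvalue: its real and imaginary parts are real eigenvectors for the same real eigenvalue, and at least one of the two pairs coming from $U, \bar U$ must be genuinely two-dimensional (otherwise $U$ and $\bar U$ would coincide after taking real/imaginary parts, contradicting $\lambda \ne \bar\lambda$). This is the step I expect to require the most care — cleanly arguing that the abstract complex degeneracy descends to a real degeneracy of $H_t$ for \emph{every} $t$.

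For part (2), the idea is that the hypothesis $O e_j = e_j$ forces every eigenvector of $O$ lying in a non-trivial eigenspace to vanish at coordinate $j$, and there must be such a non-trivial eigenspace because $O \ne I$. Concretely, since $O \ne I$, the eigenspace $\ker(O - I)$ is a proper subspace of $\CC^n$, so $O$ has an eigenvalue $\mu \ne 1$ with eigenspace $U_\mu \ne 0$. For any $v \in U_\mu$ we have $\langle v, e_j\rangle = \langle v, O e_j \rangle = \langle O^{-1} v, e_j \rangle = \langle \bar\mu^{-1} v, e_j\rangle$ — more carefully, using $O^\top = O^{-1}$ and $O e_j = e_j$: $v(j) = \langle e_j, v\rangle = \langle O e_j, v \rangle = \langle e_j, O^\top v\rangle$; since $v$ is an eigenvector of the orthogonal (hence unitary) $O$ with eigenvalue $\mu$ on the unit circle, $O^\top v = O^{-1} v = \mu^{-1} v = \bar\mu v$, so $v(j) = \bar\mu\, v(j)$, and $\bar\mu \ne 1$ forces $v(j) = 0$. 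Now take the joint analytic eigenbasis $\{\varphi_{k,t}\}$ of $H_t$ and $O$ from Lemma~\ref{lemma_kato-rellich vectors for symmetries}: at least one basis vector $\varphi_{k_0,t}$ lies in $U_\mu$ for all $t$ (the block-diagonal construction in that lemma assigns each analytic basis vector to a fixed $O$-eigenspace independent of $t$), hence $\varphi_{k_0,t}(j) = 0$ for all $t$. If one insists on a \emph{real} eigenvector, one can instead run Lemma~\ref{lemma_kato-rellich} inside the real invariant subspace $\ker(O-I)$ versus its orthogonal complement: $H_t$ preserves both, $\dim \ker(O-I) < n$, and any eigenvector of $H_t$ from the complement still has its $j$-th coordinate zero since $e_j \in \ker(O-I)$.

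The main obstacle, as noted, is part (1): Lemma~\ref{lemma_kato-rellich vectors for symmetries} only guarantees \emph{complex} joint eigenvectors, so I must translate a complex degeneracy of $H_t$ into an honest repeated real eigenvalue and make sure this holds for all $t$, not just generically. The key points are (i) $O^2 \ne I$ guarantees a non-real eigenvalue $\lambda$ of the real orthogonal matrix $O$, (ii) conjugation-symmetry of $H_t$ pairs up $U_\lambda$ with $U_{\bar\lambda}$ giving matching spectra on two orthogonal (hence independent) subspaces, and (iii) real/imaginary parts convert the resulting pair of independent complex eigenvectors of $H_t$ into a $2$-dimensional real eigenspace. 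Everything else is a short computation with the defining relations $O^\top O = I$, $OA = AO$, $OB = BO$.
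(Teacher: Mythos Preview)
Your argument is correct. Part~(2) is essentially the paper's proof: both compute $\varphi_{k,t}(j)=\bar\mu\,\varphi_{k,t}(j)$ (equivalently $(\lambda-1)\langle\varphi_{k,t},e_j\rangle=0$) from $Oe_j=e_j$ and conclude. For part~(1) you argue \emph{directly}---$O^2\ne I$ forces a non-real eigenvalue $\lambda$ of $O$, and complex conjugation pairs $U_\lambda$ with $U_{\bar\lambda}$ to produce two independent $H_t$-eigenvectors with the same eigenvalue---whereas the paper argues the \emph{contrapositive}: if $H_t$ has simple spectrum for some $t$, the joint eigenbasis from Lemma~\ref{lemma_kato-rellich vectors for symmetries} can be taken real, whence every eigenvalue of $O$ is real, hence $\pm1$, hence $O^2=I$. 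The contrapositive is shorter precisely because it bypasses your ``main obstacle'': once the eigenvectors are real there is nothing to translate. Conversely, that obstacle is lighter than you fear: for a real symmetric matrix the geometric multiplicity of any eigenvalue is the same over $\RR$ and over $\CC$ (the complex eigenspace is the complexification of the real one), so two $\CC$-independent eigenvectors of $H_t$ in $U_\lambda$ and $U_{\bar\lambda}$ already give a repeated real eigenvalue with no further work; the real/imaginary-part discussion is optional. Your route buys an explicit description of the degenerate $2$-plane $\operatorname{span}_\RR\{\operatorname{Re} v,\operatorname{Im} v\}$; the paper's buys a two-line proof.
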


\begin{proof}
    Let $\{\varphi_{k,t}\}_{k=1}^n$ be the set of shared orthonormal eigenvectors of $O$ and $H_t$ given by lemma \ref{lemma_kato-rellich vectors for symmetries}. Assume that $H_t$ has simple eigenvalues for some $t\in\RR$. Because $H_t$ is nondegenerate and real symmetric, all its eigenvectors are real up to a phase. So, without loss of generality, $\varphi_{k,t}(i)$ is real for all $i$ and $k$. Let $\{\lambda_k\}_{k=1}^n$ be the eigenvalues of $O$ corresponding to $\{\varphi_{k,t}\}_{k=1}^n$. Since $\varphi_k \ne 0$, there exists a vertex $j$ such that $\varphi_k(j) \ne 0$. This gives that
    \[ \lambda_k = \frac{(O\varphi_k)(j)}{\varphi_k(j)} \in \RR \]
    Because $O$ is orthogonal, $|\lambda_k| = 1$, so $\lambda_k \in\{-1,1\}$. Then, all the eigenvalues of $O^2$ are $1$, so $O^2=I$. This proves (1).

    Now, assume that there is some $j$ such that $Oe_j=e_j$. Since $O\ne I$, there is some $\varphi_{k,t}$ whose eigenvalue $\lambda$ is not $1$. Since $O$ is orthogonal, we have
    \[ (\lambda-1)\langle\varphi_{k,t},e_j\rangle = \langle O\varphi_{k,t},Oe_j\rangle-\langle\varphi_{k,t},e_j\rangle = 0, \]
    so $\varphi_{k,t}(j)=0$ at all $t$. This proves (2).
\end{proof}

This lemma immediately gives the following useful result:
\begin{corollary}
\label{proposition_sufficient-for-failure}
    Let $A$ and $B$ be real symmetric $n\times n$ matrices with $n$ odd. If there exists a permutation matrix $P \ne I$ that commutes with both $A$ and $B$, then $(A,B)$ is a bad pair.
\end{corollary}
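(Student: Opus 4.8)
The plan is to reduce Corollary~\ref{proposition_sufficient-for-failure} to Lemma~\ref{lemma_symmetries} by exploiting the parity of $n$. We are given a permutation matrix $P \ne I$ commuting with $A$ and $B$; in particular $P$ is orthogonal and commutes with both matrices, so Lemma~\ref{lemma_symmetries} applies to $P$ and also to any of its powers, since powers of $P$ are again permutation matrices commuting with $A$ and $B$. The strategy is a dichotomy on the order of $P$: either $P$ has even order, in which case some power of $P$ is an order-$2$ element $O$ with $O \ne I$ but $O^2 = I$, and part~(1) of Lemma~\ref{lemma_symmetries} gives degenerate spectrum for all $t$; or $P$ has odd order, in which case I claim $P$ must fix some standard basis vector, and part~(2) of Lemma~\ref{lemma_symmetries} produces an eigenvector vanishing at that coordinate for all $t$. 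Either way $H_t(A,B)$ fails one of the two conditions for every $t$, so $(A,B)$ is a bad pair.

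The first branch is routine: if $P$ has order $m$ with $m$ even, set $O = P^{m/2}$; then $O \ne I$ (since $P$ has order exactly $m$) and $O^2 = P^m = I$, and $O$ commutes with $A$ and $B$ because $P$ does. Apply Lemma~\ref{lemma_symmetries}(1).

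For the odd-order branch, the key point is a counting argument on the cycle structure of $P$ acting on the $n$-element index set $\{1,\dots,n\}$. Write $P$ as a product of disjoint cycles; the length of each cycle divides the order $m$ of $P$, which is odd, so every cycle has odd length. If $P$ fixed no index, all cycles would have length $\ge 3$ (length $1$ is a fixed point), hence in particular all cycle lengths would be odd and $\ge 3$; but the cycle lengths sum to $n$, and a sum of odd numbers each $\ge 3$\,---\,wait, that does not immediately force a contradiction with $n$ odd, so instead I argue modulo~$2$: $n \equiv (\text{number of cycles}) \pmod 2$ is false in general; the cleaner statement is that the number of indices fixed by $P$ is $n$ minus the number of indices lying in nontrivial cycles, and each nontrivial cycle of $P$ (having odd length since its length divides the odd number $m$) contributes an odd number of moved points\,---\,hmm, odd lengths sum to something of unconstrained parity. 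The correct elementary fact to invoke is: an odd-order permutation of an odd-size set must have a fixed point, because an odd-order permutation is an even permutation (its cycle type consists of odd-length cycles, each of which is an even permutation), and more to the point, if $P$ has no fixed point then $\{1,\dots,n\}$ is partitioned into cycles all of length $\ge 2$; since all cycle lengths divide the odd $m$ they are all odd, hence all $\ge 3$, and then... the genuinely clean route is: the permutation $P$ generates a cyclic group of odd order acting on $\{1,\dots,n\}$; by the orbit-counting/Burnside-type observation, or simply because the sizes of the orbits all divide $|P|$ which is odd while summing to the odd number $n$, there must be an odd number of orbits of odd size, and in particular at least one orbit, but an orbit of odd size $>1$ is fine\,---\,so the right invariant is: $n = \sum (\text{orbit sizes})$, each orbit size is odd, and a sum of odd numbers equals an odd number iff there are an odd number of summands, which is automatic but does not give a size-$1$ orbit. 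I therefore expect the main obstacle to be pinning down precisely this number-theoretic step, and the fix is to use: $\langle P\rangle$ has odd order, so by the standard fact that a group of odd order acting on a set of odd cardinality has a fixed point (equivalently, pair up each non-fixed orbit with its image under any fixed-point-free involution to get a contradiction\,---\,no), ultimately one invokes that $P^2$ also has odd order and $P$ and $P^2$ have the same fixed points, together with the observation that since $n$ is odd and $P$ preserves $\{1,\dots,n\}$, if $P$ moved everything then $P$ would define a fixed-point-free action of the odd cyclic group $\langle P \rangle$, whose orbits all have odd size $\ge 3$; but the number of such orbits times their (odd, $\ge 3$) sizes summing to the odd $n$ is consistent, so the clean claim I will actually prove and cite is the elementary statement \emph{an odd-order permutation on an odd number of letters fixes at least one letter}, whose one-line proof is: write the permutation as a product of disjoint cycles of lengths $\ell_1,\dots,\ell_r$ (including $1$-cycles) with $\sum \ell_i = n$; the order is $\operatorname{lcm}(\ell_i)$, which is odd, so every $\ell_i$ is odd; if there were no $1$-cycle then every $\ell_i \ge 3$ is odd, and we would need $n = \sum \ell_i$ with an even count of $\ell_i$'s forced by...

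Let me just commit: the correct and genuinely short argument is that an odd permutation-order forces all cycle lengths odd, hence $n \equiv r \pmod 2$ where $r$ is the number of cycles; and separately a fixed-point-free $P$ has all cycles of length $\ge 2$, but combined with oddness, length $\ge 3$; I will instead sidestep by noting $P \ne I$ of odd order on odd $n$: consider the involution-free setting and apply the pairing argument to the set of $P$-orbits of size $>1$, which I pair with nothing\,---\,so the final clean version I will write is: \emph{since $n$ is odd and each $P$-orbit has odd size (its size divides the odd order of $P$), and $n$ is a sum of these odd orbit-sizes, there is at least one orbit of size $1$} --- this is FALSE as stated ($3+3+3=9$). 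I will therefore prove the correct lemma: \emph{if $P$ is a permutation of odd order on an odd number of letters, then $\mathrm{Fix}(P)$ has odd cardinality, in particular is nonempty}, via: $n - |\mathrm{Fix}(P)| = \sum_{\text{nontrivial orbits}} |O|$, and each nontrivial orbit has odd size, and\,---\,the parity of a sum of odd numbers is the parity of their count, so $|\mathrm{Fix}(P)| \equiv n - (\text{number of nontrivial orbits}) \pmod 2$, which still needs the count's parity; the honest clean fact is the classical one that I'll cite: \textbf{a group of odd order acting on a set of odd cardinality has a fixed point} (immediate from the class equation, since every nontrivial orbit has size dividing the odd group order hence odd and $>1$, and $n = |\mathrm{Fix}| + \sum (\text{odd numbers} > 1)$ forces, reading mod $2$, $|\mathrm{Fix}| \equiv n - (\#\text{orbits}) $\,---\,I concede this needs the refinement that each orbit size, being odd and $>1$, is $\ge 3$, and that $n$ odd with $n = |\mathrm{Fix}| + 3k + (\text{corrections})$...). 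Given the space constraints here, I will in the actual writeup simply prove: \emph{$P$ of odd order on $\{1,\dots,n\}$ with $n$ odd has a fixed point} by the class equation ($\sum$ orbit sizes $= n$ odd, all nontrivial orbit sizes odd, so an odd number of orbits; if all were nontrivial, each size $\ge 3$ odd, and $n \equiv $ (number of orbits) $\pmod 2$ gives no contradiction alone), so I will instead invoke that the number of fixed points of $P$ equals the number of fixed points of the whole group $\langle P\rangle$, which by Burnside equals $|\langle P\rangle|^{-1}\sum_{g}|\mathrm{Fix}(g)| \ge |\langle P\rangle|^{-1} \cdot n > 0$ after noting $|\mathrm{Fix}(g)| \equiv n \pmod 2$ for all $g$ of odd order\,---\,hence each $|\mathrm{Fix}(g)|$ is odd, so $\ge 1$, so in particular $|\mathrm{Fix}(P)| \ge 1$. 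That is the clean route, and it is the step I flag as the crux. Having secured a fixed index $j$ with $Pe_j = e_j$, Lemma~\ref{lemma_symmetries}(2) produces an analytic eigenvector of $H_t(A,B)$ vanishing at $j$ for all $t$, completing the proof.
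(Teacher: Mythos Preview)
Your dichotomy is set up backwards, and the odd-order branch is attempting to prove something false.

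In the even-order branch you produce $O = P^{m/2}$ with $O \ne I$ and $O^2 = I$, and then invoke Lemma~\ref{lemma_symmetries}(1). But part~(1) has hypothesis $O^2 \ne I$, not $O^2 = I$; it gives you nothing here. In the odd-order branch you try to show that a permutation of odd order on an odd set has a fixed point. This is simply false: a single $n$-cycle with $n$ odd (or your own example $3+3+3=9$) is a fixed-point-free permutation of odd order on an odd set. Your final ``clean route'' does not repair this: Burnside's lemma computes the number of \emph{orbits}, not the number of common fixed points, and the parity assertion $|\mathrm{Fix}(g)| \equiv n \pmod 2$ for $g$ of odd order is again refuted by the $n$-cycle (zero fixed points, $n$ odd). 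That congruence holds for $g$ of order~$2$, not for $g$ of odd order.

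The paper's argument uses the much simpler dichotomy on whether $P^2 = I$. If $P^2 \ne I$, apply Lemma~\ref{lemma_symmetries}(1) directly to $P$. If $P^2 = I$ with $P \ne I$, then $P$ is a product of disjoint transpositions, which moves an even number of indices; since $n$ is odd, some index $j$ is fixed, and Lemma~\ref{lemma_symmetries}(2) finishes. Your two branches can in fact be salvaged by swapping their conclusions: odd order $\ge 3$ forces $P^2 \ne I$, so use part~(1); even order gives an involution $O$, which by the transposition-parity argument fixes a point, so use part~(2). But as written the proof is incorrect.
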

\begin{proof}
     If $P^2 \ne \Id$, we are done by lemma \ref{lemma_symmetries} (1). If $P^2 = \Id$, $P \ne I$ is a permutation matrix of order $2$, so it can be decomposed as a product of disjoint two-cycles. A product of disjoint two-cycles acts nontrivially on an even number of standard basis vectors, and $n$ is odd, so $P$ must have fix some standard basis vector $e_j$. By lemma \ref{lemma_symmetries} (2), $(A,B)$ is a bad pair. 
\begin{comment}
    If $O^2\ne I$, then Lemma \ref{lemma_symmetries} (1) implies that $(A,B)$ is a bad pair. Otherwise, $O^2=I$, so $O$ can be decomposed as a disjoint union of two-cycles.\end{comment} 
% \lior[]{it is not clear what are you doing and why. You want to expalin that you want to show that O as a permutation has a fixed point. You do so by noting that its cycle structure contains either fixed points or two-cycles, and since $n$ is odd, it cannot be only two-cycles. }
% \Yinka[]{resolved???}
\begin{comment}
    Such a decomposition can only include an even number of vertices. Because $n$ is odd, there must exist at least one vertex $j$ not part of any two-cycle. Therefore, $Oe_j=e_j$ and Lemma \ref{lemma_symmetries} (2) implies that $(A,B)$ is a bad pair.
\end{comment}
\end{proof}

\section{Discrete Probability Distributions}
\label{section_discrete}

%set up why it's different (Y)

We now return to the Anderson Model. In particular, we will consider the Bernoulli distribution %\lior[]{previuosly you called it Bernulli. Pick a name and stick with it}
\[ \mu = \begin{cases}
    1 \quad \text{with probability $p$}\\
    -1 \quad \text{with probability $1-p$}
\end{cases} \]
In contrast to an absolutely continuous distribution, the probability of getting a good potential is no longer\footnote{Take $V=I$. Then, $H_t(V)$ is degenerate for all $t\in\RR$.} $1$. We now prove Theorem \ref{theorem_lower-bound-general}, which gives a rough lower bound on the probability of seeing a bad potential:

\theoremlowerboundgeneral*

\begin{proof}
    For each $i = 1, \dots, d$, let $S_i$ be the permutation matrix that maps the vertex $(r_1,\ldots,r_i,\ldots,r_d)$ to the vertex $(r_1,\ldots,r_i+1,\ldots,r_d)$, where the coordinates are taken modulo the appropriate side length. 
    %\lior[]{ i-th coordinate is not clear, as you wroked up to now with vertices numbered from 1 to $n$. You want to remind that we are now on a grid, so every vertex can be assigned with $ d$-coordinates. }
    We say that a potential $V$ has $S_i$ symmetry if it commutes with $S_i$. Note that we can write $\Delta$ as 
    \[ \Delta = 2dI - \sum_{i=1}^d (S_i + S_i^{-1}). \]
    Since shifts in different directions commute with each other, $S_i$ commutes with $\Delta$. 
    % \lior[]{dont say note that $S_i$ commutes with $\Delta$, write explicitly, note that $S_i^T=S_i^{-1}$, $[S_{i},S_{j}]=0$,  and $\Delta=2dI+\sum_{i=1}^d S_i+S_i^{-1}$ }
    As $S_i$ has order $L_i>2$, Lemma \ref{lemma_symmetries} (1) guarantees that any potential with $S_i$ symmetry is bad. Hence,
    \[ \PP(V\text{ is bad}) \ge \PP(V\text{ has }S_i\text{ symmetry for some } i). \]
    Let $S(j_1,\ldots,j_k)$ be the event that $V$ has $S_i$ symmetry for $i=j_1,\ldots,j_k$ By the principle of inclusion-exclusion,
    \begin{align}
    \label{equation_pie}
        \PP(V\text{ has }S_i\text{ symmetry for some } i) = \sum_{k=1}^d\left[(-1)^{k-1}
        \cdot\sum_{j_1<\ldots<j_k}\PP(S(j_1,\ldots,j_k))\right]
    \end{align}     %\lior[]{You need to explain what you are doing here. That you want to count how many potentials with $S_i$ symmetry you have, and for that you partition this set of potential to those with $S(....)$, and then you need to explain that you count these by counting the number of plus ones, aka m.  }
    We now compute $S(j_1,\ldots,j_k)$. Let $\{j_{k+1},\ldots,j_d\}=\{1,\ldots,d\}\setminus\{j_1,\ldots,j_d\}$. Observe that a potential that has $S_i$ symmetry for $i=j_1,\ldots,j_d$ is entirely determined by its values on the subgraph $\ZZ/L_{j_{k+1}}\ZZ\times\cdots\times\ZZ/L_{j_d}\ZZ$ defined by the tuples that are $0$ on the $j_1,\ldots,j_k$ dimensions. The probability of getting such a potential with exactly $m$ positive $1$ entries in the subgraph is
    \[ \binom{L_{j_{k+1}}\cdots L_{j_d}}{m}\cdot p^{mL_{j_1}\cdots L_{j_k}}\cdot(1-p)^{(L_{j_{k+1}}\cdots L_{j_d}-m)L_{j_1}\cdots L_{j_k}}. \]
    Adding over $m$, we get
    \begin{align*}
        \PP(S(j_1,\ldots,j_k)) = \left(p^{L_{j_1}\cdots L_{j_k}} + (1-p)^{L_{j_1}\cdots L_{j_k}}\right)^{L_{j_{k+1}}\cdots L_d}.
    \end{align*}
    Plugging this into equation (\ref{equation_pie}) we obtain the claimed lower bound.
\end{proof}

It is worth noting that this lower bound can be improved by considering other shared symmetries, such as reflections.
In general, we do not know whether the converse of lemma \ref{lemma_symmetries} is true. So, we do not get an upper bound for $\PP(V \, \text{is bad})$. However, in the one-dimensional case when $L$ is prime, we see that every bad potential has a reflection symmetry. 

%To understand our spectral behavior, we also want to find an upper bound on the probability that our potential is bad. In fact, for a one-dimensional grid of prime length, being bad is equivalent to having reflection symmetry:

%denote V(k)=V(k,k)

\begin{proposition}
\label{Proposition-L-prime}
    Suppose $d=1$ and $L$ is prime. Then $V=\mathrm{diag}(v(1),\ldots,v(n))$ is bad if and only if $v$ has reflection symmetry $v(j+i)=v(j-i)$ about some vertex $j$.
    %\lior[]{write $v(j+i) = v(j-i)$ for all $i$ here, after you said reflection symmetry. Not in item 2. }
    In particular, 
    \begin{enumerate}[(1)]
        \item If $H_t$ has degenerate spectrum for all $t$, then $V=\pm I$. 
        %\lior[]{again, make it positive. If $H_t$ has a multiple eigenvalue for every $t$, then...}
        \item If $H_t$ has some vanishing eigenvector for all $t$, then $v$ has some reflection symmetry.
    \end{enumerate}
\end{proposition}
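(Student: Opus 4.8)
The plan is to prove the two directions separately, splitting the forward implication according to which of the two conditions fails for all $t$.

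For ``$\Leftarrow$'': suppose $v(j+i)=v(j-i)$ for all $i$, and let $R_j$ be the permutation matrix of the cycle automorphism $x\mapsto 2j-x$. Then $R_j$ commutes with $\Delta$ (it is a graph automorphism) and with $V$ (by the assumed symmetry), $R_j^2=I$, $R_j\neq I$ since $L>2$, and $R_je_j=e_j$ because $j$ is the unique fixed vertex ($L$ odd, so $2$ is invertible mod $L$). Lemma \ref{lemma_symmetries}(2) then produces an eigenvector of $H_t$ vanishing at $j$ for all $t$, so $V$ is bad; this also yields (2), since a persistently vanishing eigenvector makes $V$ bad and --- granting ``$\Rightarrow$'' --- forces a reflection symmetry. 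If $v$ is symmetric about \emph{every} vertex then for any $x,y$, taking $j=2^{-1}(x+y)$ gives $v(x)=v(y)$, so $v$ is constant and $V=\pm I$. For ``$\Rightarrow$'', apply Proposition \ref{proposition_all-but-finitely} to the analytic eigendata $\{\varphi_{k,t},\lambda_{k,t}\}$: if $V$ is bad then not all of the functions $\lambda_{k,t}-\lambda_{k',t}$ and $\varphi_{k,t}(i)$ can have finitely many zeros (else $H_t$ would be good off a finite set), so either (A) $\lambda_{k,t}\equiv\lambda_{k',t}$ for some $k\neq k'$, or (B) $\varphi_{k,t}(i)\equiv0$ for some $k,i$. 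It suffices to show (A) forces $V=\pm I$ (which also gives (1)) and (B) forces a reflection symmetry. In both cases I would encode the ``for all $t$'' constraint through the transfer-matrix formalism for the recursion $\varphi(x+1)=(2+tv(x)-\lambda_t)\varphi(x)-\varphi(x-1)$: writing $U(w)=\begin{pmatrix}w&-1\\1&0\end{pmatrix}$ and $T_x(t)=U(2+tv(x)-\lambda_t)$, since $v(x)\in\{\pm1\}$ each $T_x(t)$ is one of $U(a(t)),U(b(t))$ with $a(t)=2+t-\lambda_t$ and $b(t)=a(t)-2t$, arranged according to the $\pm1$-pattern of $v$, and the spectral conditions become conditions on the product of these matrices around the cycle.

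In case (A) the two branches coincide at $t=0$, hence emanate from a common eigenspace of $\Delta$; every eigenvalue of the cycle Laplacian has multiplicity at most $2$, so this is a two-dimensional Fourier plane $W_m=\operatorname{span}\{\cos(2\pi mx/L),\sin(2\pi mx/L)\}$. First-order perturbation theory in $W_m$ shows $P_{W_m}VP_{W_m}$ is a scalar multiple of $P_{W_m}$, and persistence says the full monodromy $T_0(t)\cdots T_{L-1}(t)$ equals $I$ for all $t$. The target is to upgrade this to ``$V$ preserves $W_m$'': a short Fourier computation then forces $v(x)=\alpha+\beta\,\omega^{-2mx}$ for constants $\alpha,\beta$ ($\omega=e^{2\pi i/L}$), and since $v$ is real and $L$ is prime (so $1,\omega^{\pm 2mx}$ are independent characters) one gets $\beta=0$, i.e.\ $v$ constant and $V=\pm I$. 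To obtain ``$V$ preserves $W_m$'' I would expand $T_0(t)\cdots T_{L-1}(t)=I$ in powers of $t$, using that $\cos(2\pi m/L)\neq0$ for all $m$ (true for odd $L$) and --- crucially --- that $L$ is prime, so that $\det(XI-\Delta)=X\,g(X)^2$ with $g$ irreducible over $\QQ$ of degree $(L-1)/2$; primality is what keeps the higher-order constraints from being absorbed into a spurious low-degree factor of $\chi_V(t,X)=\det(XI-\Delta-tV)\in\QQ[t][X]$.

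In case (B), evaluating the eigenvalue equation at $i$ gives $\varphi_t(i+1)+\varphi_t(i-1)=0$ for all $t$; propagating around the cycle, the product $M'(t)$ of the transfer matrices at all vertices $\neq i$ sends $e_1$ to $-e_2$, hence $M'(t)=\begin{pmatrix}0&1\\-1&d_t\end{pmatrix}$ for every $t$. Using $\Sigma\,T_x(t)^{\top}\Sigma=T_x(t)$ with $\Sigma=\operatorname{diag}(1,-1)$, the analogous product for the potential reflected about $i$ equals $\Sigma M'(t)^{\top}\Sigma=M'(t)$; thus the word of $v$ on $\ZZ/L\ZZ\setminus\{i\}$ and its reversal give the same matrix product for all $t$. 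Since the difference of a word product and its reversal is divisible by $U(a)-U(b)$, while the curve swept by $(a(t),b(t))$ has degree exactly $L$ (eliminating $t$ from $\chi_V(t,\lambda_t)=0$ gives an equation with leading part $(-1)^La^{|A|}b^{|B|}$, where $A=\{x:v(x)=1\}$, $B=\{x:v(x)=-1\}$), a dimension count forces the word to be a palindrome --- i.e.\ $v$ symmetric about $i$ --- unless that curve degenerates, which for prime $L$ happens only through a genuine reflection symmetry of $v$ (producing a degree-$(L\pm1)/2$ factor of $\chi_V$) or $V=\pm I$. The main obstacle is closing case (A): passing from the single first-order identity ``$P_{W_m}VP_{W_m}$ scalar'' to the much stronger ``$V$ scalar'' requires controlling the entire tower of higher-order constraints on the Fourier coefficients of $v$, and it is exactly there that primality of $L$ must do its work; case (B) is more rigid (the vanishing condition nearly pins the monodromy), but it still relies on the same arithmetic input --- that for prime $L$ the only way $\chi_V$ can acquire an extra factor over $\QQ(t)$ is through a genuine reflection symmetry of $v$ (or $V=\pm I$).
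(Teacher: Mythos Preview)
Your ``$\Leftarrow$'' direction is fine and matches the paper. The logical remarks around (1) and (2) are slightly tangled --- (1) and (2) \emph{are} the two halves of ``$\Rightarrow$'', not consequences of it --- but your case split (A)/(B) is the right one.

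The real gap is that you miss the minimal-polynomial argument, which is where primality of $L$ actually enters and which closes both cases at first order. In case (A) you correctly derive the first-order condition ``$P_{W_m}VP_{W_m}$ is scalar'': writing $P(z)=\tfrac1L\sum_x v(x)z^x$, this $2\times2$ Fourier block is $\bigl(\begin{smallmatrix}P(1)&P(\omega^{2m})\\ \overline{P(\omega^{2m})}&P(1)\end{smallmatrix}\bigr)$, so ``scalar'' says exactly $P(\omega^{2m})=0$. But $L$ prime means the minimal polynomial of $\omega^{2m}$ over $\QQ$ is the cyclotomic $\Phi_L=1+z+\cdots+z^{L-1}$, of the same degree $L-1$ as $P$; hence $P$ is a rational multiple of $\Phi_L$, $v$ is constant, and $V=\pm I$. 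That is the entire proof of (1). There is no ``tower of higher-order constraints'' to control and no need to upgrade to $VW_m\subseteq W_m$: you had already finished the moment you wrote down the first-order condition.

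Case (B) is handled by the same mechanism. Assuming $P(\omega^{2k})\neq0$ (else case (A) gives $V=\pm I$, which is reflection-symmetric), the $2\times2$ block has simple spectrum, so degenerate perturbation theory pins the zeroth-order direction of $\tilde\varphi_k^{(0)}$ inside $W_k$ to be proportional to $\bigl(P(\omega^{2k})/|P(\omega^{2k})|,\ \pm1\bigr)$. Inverse-Fourier-transforming and imposing $\varphi_k^{(0)}(j)=0$ gives $\omega^{-2jk}P(\omega^{2k})\in\RR$; the imaginary part is a rational polynomial of degree $\le L-1$ in $\omega$ with zero constant term, hence identically zero by irreducibility of $\Phi_L$, and that coefficient identity is precisely $v(j+r)=v(j-r)$.

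Your transfer-matrix route for (B) is a plausible alternative and the identity $\Sigma (M')^{\top}\Sigma=M'$ is correct, but ``a dimension count forces the word to be a palindrome'' is not a proof: equality of a word product and its reversal along the one-dimensional curve $(a(t),b(t))$ does not in general force the word to be a palindrome, and you give no mechanism linking the degree of that curve to the combinatorics of the word. The same applies to your monodromy-expansion plan for (A). The paper sidesteps all of this by working entirely with zeroth- and first-order data in Fourier space, letting primality act cleanly through $\Phi_L$.
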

\begin{proof}
    Throughout this proof, we take indices modulo $L$ for simplicity. First, notice that if $v$ has a reflection symmetry about a vertex $j$, then $V$ shares a permutation symmetry with $\Delta$. Because $L$ is odd, corollary \ref{proposition_sufficient-for-failure} guarantees that any such $V$ is bad. It remains to prove the converse.

    Lemma \ref{lemma_kato-rellich} gives an orthonormal basis of eigenvectors $\{\varphi_{k,t}\}_{k=0}^{L-1}$ and corresponding eigenvalues $\{\lambda_{k,t}\}_{k=0}^{L-1}$ that depend analytically on $t$. Let $\omega = \exp(2\pi i/L)$. The \textit{Discrete Fourier Transform} is the unitary matrix defined as
    \[ \SF(r,s) := \frac1{\sqrt L}\omega^{rs}. \]
    Observe that $\SF$ diagonalizes $\Delta$. Namely, we have that
    \[ \tilde\Delta(r,s) := \left(\SF\Delta\SF^{-1}\right)(r,s) = \delta_{r,s}(2-\omega^r-\omega^{-r}). \]
    Note that $0$ is a simple eigenvalue of $\Delta$. Recall that $L$ is an odd prime, so every other eigenspace has dimension $2$, corresponding to $\tilde\Delta(r,r) = \tilde\Delta(-r,-r)$.
    %\lior[]{ remind here that $L$ is odd and infact prime, so $\omega^j\ne-1$ for any $j$. } 
    Similarly, we compute
    \[ \tilde V(r,s) := \left(\SF V\SF^{-1}\right)(r,s) = \frac1L\sum_{j=0}^{L-1}\omega^{j(r-s)}v(j). \]
    %\lior[]{add to this equality at the end $=\frac{1}{\sqrt{L}}(\SF v)(r-s)$, so that it is clear these entries are the Fourier transfrom of $v$.  }
    Define $$P(z) := \frac1L\sum_{j=0}^{L-1}v(j)z^j,$$
    so $\tilde V(r,s) = P(\omega^{r-s})$. Consider the one-parameter perturbation $\tilde H_t = \tilde\Delta + t\tilde V$ and define $\tilde\varphi_{k,t} = \SF\varphi_{k,t}$. Since $\SF$ is unitary, we know that $\{\tilde\varphi_{k,t}\}_{k=0}^{L-1}$ is an orthonormal basis and
    \begin{align}
        \label{perturbation-eq}
        \tilde H_t \tilde\varphi_{k,t} = \lambda_{k,t}\tilde\varphi_{k,t}.
    \end{align}
    Let $\tilde\varphi_k^{(0)} = \tilde\varphi_{k,0}$ and $\lambda_k^{(0)}=\lambda_{k,0}=2-\omega^k-\omega^{-k}$.
    %\lior[]{ You need to explain that $\lambda_{k,0}=\lambda_{-k,0}=2-\omega^k-\omega^{-k}$ and that the eigenvectors are linear combinations of $e_{k}$ and $e_{-k}$.}
    Denote by $\tilde\varphi_k^{(1)},\lambda_k^{(1)}$ the first derivatives at $t=0$ of $\tilde\varphi_{k,t}$ and $\lambda_{k,t}$ respectively. Taking the first derivative of equation (\ref{perturbation-eq}) at $t=0$, we obtain
    \begin{align}
        \label{first-order-eq}
        \tilde V\tilde\varphi_k^{(0)} + \tilde\Delta\tilde\varphi_k^{(1)} = \lambda_k^{(1)}\tilde\varphi_k^{(0)} + \lambda_k^{(0)}\tilde\varphi_k^{(1)}.
    \end{align}
    Suppose that $H_t$ has degenerate spectrum for all $t$. It follows from Proposition \ref{proposition_all-but-finitely} that there is some $k\ne0$ such that $\lambda_{k,t}=\lambda_{-k,t}$ for all $t$. In particular, $\lambda_k^{(1)} = \lambda_{-k}^{(1)}$. Evaluating equation (\ref{first-order-eq}) at the $k-$th and $(-k)-$entries, we obtain
    \begin{align}
    \label{eigval-problem}
        \begin{pmatrix}
            P(1)& P(\omega^{2k}) \\
            P(\omega^{-2k}) & P(1)
        \end{pmatrix} \begin{pmatrix}
            \tilde\varphi_k^{(0)}(k) \\
            \tilde\varphi_k^{(0)}(-k)
        \end{pmatrix} = \lambda_k^{(1)}\begin{pmatrix}
            \tilde\varphi_k^{(0)}(k) \\
            \tilde\varphi_k^{(0)}(-k)
        \end{pmatrix}.
    \end{align}
    Note that $\lambda_{-k}^{(1)}$ solves the same eigenvalue problem (\ref{eigval-problem}) as $\lambda_k^{(1)}$. Therefore, $\lambda_k^{(1)}=\lambda_{-k}^{(1)}$ implies that $P(\omega^{2k})=0$. Since the minimal polynomial of $\omega^{2k}$ is $\Phi:=\sum_{j=0}^{L-1}z^j$, it follows that $P=\pm L\Phi$. Thus, $V=\pm I$. This proves (1).
    % \lior[]{missing $1/L$ somewhere in the polynoinmal }

    Now, suppose that $H_t$ has some vanishing eigenvector for all $t$, so $\varphi_{k,t}(j)=0$ for all $t$. Since (2) clearly holds for $V= \pm I$, we may assume that $P(\omega^{2k})\ne0$, and we can solve the eigenvalue problem (\ref{eigval-problem}) to obtain
    % \lior[]{you need $\pm k$ on the subscript of the left hand side to match the $\pm1$ on the right hand side}
    % \Yinka[]{resolved?}
    \begin{align}
    \label{equation_eigvec}
        \begin{pmatrix}
            \tilde\varphi_{k}^{(0)}(k) \\
            \tilde\varphi_{k}^{(0)}(-k)
        \end{pmatrix} = \frac1{\sqrt2}\begin{pmatrix}
            \frac{P(\omega^{2k})}{|P(\omega^{2k})|} \\
            \pm1
        \end{pmatrix}.
    \end{align}
    %Recall that $\varphi_k^{(0)} = \SF^{-1}\tilde\varphi_k^{(0)}$, so
    %\lior[]{I dont understand this "so" how did you get to the fact that this is real. is this for all $j$? what is the statement and how is it related to $\varphi_k^{(0)} = \SF^{-1}\tilde\varphi_k^{(0)}$ } 
    %\Yinka[]{resolved?}
    Recall that $\varphi_k^{(0)} = \SF^{-1}\tilde\varphi_k^{(0)}$. Combining this with $\varphi_k^{(0)}(j)=0$ and equation (\ref{equation_eigvec}) gives us the condition
    \[ \omega^{-2jk}P(\omega^{2k}) = \pm|P(\omega^{2k})| \in\RR. \]
    Expanding the imaginary part, we have 
    \begin{align*}
        0 &= \sum_{r=0}^{L-1}v(r)\left(\omega^{2k(r-j)}-\omega^{2k(j-r)}\right) \\
        &= \sum_{r=0}^{L-1}\left(v(j+(2k)^{-1}r)-v(j-(2k)^{-1}r)\right)\omega^r,
    \end{align*}
    where $(2k)^{-1}$ is the inverse of $2k$ modulo $L$. Define 
    \[ f(z):=\sum_{r=0}^{L-1}(v(j+(2k)^{-1}r)-v(j-(2k)^{-1}r))z^r. \] Since the minimal polynomial of $\omega$ is $\Phi$ and the constant coefficient of $f$ is $0$, we must have that $f$ is the zero polynomial. Therefore, $v(j+(2k)^{-1}r)-v(j-(2k)^{-1}r) = 0$ for all $r$. Letting $r = 2ki$, this proves (2).

    %Finally, suppose $v(j+i)=v(j-i)$ for all $i$ and let $O_j$ be the permutation matrix that reflects about vertex $j$. Then, $O_j$ commutes with $\Delta$ and $V$, so Proposition \ref{proposition_sufficient-for-failure} gives that $V$ is bad. This proves (3).
    %\lior[]{I need to read the last part of the proof again after your explain better what happens here}
\end{proof}

We use this proposition to prove Theorem \ref{theorem_exact-probability}:

\theoremexactprobability*

\begin{proof}
    By Proposition \ref{Proposition-L-prime}, $V$ is bad if and only if it has a reflection symmetry about a vertex $j$. Then, $V$ is determined by its value at $j$ and at the $\frac{L-1}{2}$ vertices to one side of vertex $j$. The probability of getting such a potential with $+1$ at exactly $2m$ vertices not including the vertex $j$ is 
    \[ \binom{\frac{L-1}{2}}{m}p^{2m}(1-p)^{L-1-2m} \] 
    Summing these over $m$,
    \begin{align*}
        \PP(V \text{ has reflection about $j$}) = (p^2+(1-p)^2)^\frac{L-1}{2}.
    \end{align*}

    Assume that $V$ has a reflection symmetry about two vertices $j$ and $k$. Then,
    \[ V(r,r) = V(2j-r,2j-r) = V(2k-2j+r,2k-2j+r) \]
    Since $2k-2j$ is nonzero modulo $L$, the above equality iterated $(2k-2j)^{-1}(s-r)$ times gives that $V(r,r)=V(s,s)$ for any $r$ and $s$, so $V=\pm I$. The probability this happens is $p^L+(1-p)^L$. Summing over the vertices, 
    \begin{align*}
        \PP(V \text{ is bad}) &= \sum_{j=1}^L\left((p^2+(1-p)^2)^\frac{L-1}{2}-(p^L+(1-p)^L)\right) + p^L+(1-p)^L\\
        &= L(p^2+(1-p)^2)^\frac{L-1}{2} - (L-1)(p^L+(1-p)^L)
    \end{align*}

    If $p \not= 0,1$, then $|p|<1, |1-p|<1$, and $|p^2+(1-p)^2|<1$. So, taking the limit as $L \rightarrow \infty$, we have 
    \begin{align*}
        \lim_{L \rightarrow \infty}\PP(V \text{ is bad}) &= %\lim_{L \rightarrow \infty}(L(p^2+(1-p)^2)^\frac{L-1}{2} - (L-1)(p^L+(1-p)^L)) = 
        0.
    \end{align*}
\end{proof}

\pagebreak

\section{Acknowledgments}

This paper is the culmination of a project started for the 2025 edition of the Summer Program in Undergraduate Research (SPUR) at the Massachusetts Institute of Technology. The authors thank Lior Alon for proposing the project and his valuable mentorship throughout the program. We also thank Roman Bezrukavnikov and Jonathan Bloom for organizing SPUR and their feedback during the summer. Finally, we thank the Department of Mathematics for providing funding for the project.

\bibliographystyle{amsalpha}
\bibliography{bibspur}

\end{document}